\documentclass[sigconf,10pt]{acmart}

\newif\ifACM
\ACMtrue

\usepackage[T1]{fontenc}
\usepackage[utf8]{inputenc}

\usepackage{upquote}

\usepackage{microtype}
\UseMicrotypeSet[protrusion]{basicmath} 

\usepackage{environ}
\usepackage{graphicx,grffile}
\usepackage{multirow}
\usepackage{xspace}
\usepackage{tabularx}
\usepackage{ragged2e}
\usepackage{booktabs}
\usepackage{paralist}
\usepackage[american]{babel}
\usepackage[shortlabels]{enumitem}
\setlist{nosep,leftmargin=*}
\usepackage{courier,color,wrapfig}
\usepackage{balance}
\usepackage{epstopdf}
\usepackage{url}
\usepackage{pifont}
\usepackage{subfigure}
\usepackage[table]{xcolor}

\usepackage{mathtools}
\usepackage{thmtools}
\usepackage[normalem]{ulem}

\usepackage[linesnumbered]{algorithm2e}
\usepackage{etoolbox}
\usepackage[textsize=small]{todonotes}
\usepackage[compact,small]{titlesec}
\usepackage{caption}
\usepackage[export]{adjustbox}

\usepackage[capitalize]{cleveref}
\crefformat{section}{§#2#1#3}

\usepackage{tikz}
\usetikzlibrary[shapes.misc, shapes.geometric, shapes.arrows, positioning, decorations.pathmorphing, matrix, graphs, fit, calc, arrows, arrows.meta, backgrounds]

\usepackage{tcolorbox}
\usepackage{csvsimple}

\makeatletter
\def\maxwidth{\ifdim\Gin@nat@width>\linewidth\linewidth\else\Gin@nat@width\fi}
\def\maxheight{\ifdim\Gin@nat@height>\textheight\textheight\else\Gin@nat@height\fi}
\makeatother
\setkeys{Gin}{width=\maxwidth,height=\maxheight,keepaspectratio}
\makeatletter
\g@addto@macro{\UrlBreaks}{\UrlOrds}
\makeatother

\makeatletter
\let\origsection\section
\let\origsubsection\subsection

\renewcommand\section{\@ifstar{\starsection}{\nostarsection}}
\renewcommand\subsection{\@ifstar{\starsubsection}{\nostarsubsection}}

\newcommand\sectionprelude{\vspace{0ex}}
\newcommand\sectionpostlude{\vspace{0ex}}
\newcommand\subsectionprelude{\vspace{0ex}}
\newcommand\subsectionpostlude{\vspace{0ex}}

\newcommand\nostarsection[1]{\sectionprelude\origsection{#1}\sectionpostlude}
\newcommand\starsection[1]{\sectionprelude\origsection*{#1}\sectionpostlude}

\newcommand\nostarsubsection[1]{\subsectionprelude\origsubsection{#1}\subsectionpostlude}
\newcommand\starsubsection[1]{\subsectionprelude\origsubsection*{#1}\subsectionpostlude}

\makeatother

\newcommand\paraspace{\vspace*{0.25ex}}
\providecommand\parab[1]{\paraspace\noindent\textbf{#1}}

\apptocmd\normalsize{%
\abovedisplayskip=5pt
\abovedisplayshortskip=5pt
\belowdisplayskip=5pt
\belowdisplayshortskip=5pt
}{}{}

\newcommand{\sysname}{OnlineTE\xspace}

\newcommand{\xvar}{$x$}
\newcommand{\xivar}{$x_{inner}$}
\newcommand{\xhvar}{$x_{hier}$}

\newcommand{\divar}{dual$_{inner}$}
\newcommand{\zvar}{$z$}
\newcommand{\zivar}{$z_{inner}$}
\newcommand{\zhvar}{$z_{hier}$}
\newcommand{\xupd}{\xvar{}-update}
\newcommand{\xiupd}{\xivar{}-update}
\newcommand{\xhupd}{\xhvar{}-update}
\newcommand{\zupd}{\zvar{}-update}
\newcommand{\ziupd}{\zivar{}-update}
\newcommand{\zhupd}{\zhvar{}-update}
\newcommand{\dupd}{dual-update}

\newcommand{\ie}{\emph{i.e.,}\xspace}
\newcommand{\eg}{\emph{e.g.,}\xspace}

\tcbuselibrary{skins, breakable}
\newtcolorbox{outline}[1][]
{
  enhanced,
  breakable,
  skin first=enhanced,
  skin middle=enhanced,
  skin last=enhanced,
  colback=blue!5!white, 
  colframe=blue!50!black, 
  fonttitle=\bfseries, 
  boxrule=0.5pt,
  arc=4mm, 
  left=5pt,
  right=5pt,
  title={Outline},
  #1
}
\tcbset{before upper=\setlength{\parskip}{1ex}} 

\DeclareMathOperator*{\argmin}{arg\,min}

\NewDocumentCommand{\evalat}{sO{\Big}mm}{%
  \IfBooleanTF{#1}
   {\mleft. #3 \mright|_{#4}}
   {#3#2|_{#4}}%
}

\newcommand{\atiter}[2]{#1^{(#2)}}
\newcommand{\indexedatiter}[3]{#1_{#2}^{(#3)}}

\newcommand{\edgeout}[1]{$E^{\text{out}}_{#1}$}
\newcommand{\edgein}[1]{$E^{\text{in}}_{#1}$}

\definecolor{tablegray}{gray}{0.9}

\declaretheorem[name=Definition]{definition}
\declaretheorem[name=Theorem, numberlike=definition]{theorem}
\crefname{theorem}{Theorem}{Theorems}
\crefname{definition}{Definition}{Definitions}

\begin{document}

\title{Near-optimal Online Traffic Engineering}

\author{Arvin Ghavidel}
\affiliation{
    \institution{University of Southern California}
}

\author{Pooria Namyar}
\affiliation{
    \institution{Microsoft Research}
}

\author{Nikolai Matni}
\affiliation{
    \institution{University of Pennsylvania}
}

\author{Walter Willinger}
\affiliation{
    \institution{Northwestern University}
}

\author{Ramesh Govindan}
\affiliation{
    \institution{University of Southern California}
}

\date{}



\ifACM
\renewcommand\footnotetextcopyrightpermission[1]{} 
\setcopyright{none}
\settopmatter{printacmref=false, printccs=false, printfolios=true}
\acmDOI{}
\acmISBN{}
\acmConference[Submitted for review]{}
\acmYear{2018}
\acmPrice{}
\pagestyle{plain}
\fi

\ifACM


\fi

\newcommand{\grantack}[1]{
  \ifthenelse{\equal{#1}{CTA}}{\thanks{Research was sponsored by the Army Research Laboratory and was accomplished under Cooperative Agreement Number W911NF-09-2-0053 (the ARL Network Science CTA). The views and conclusions contained in this document are those of the authors and should not be interpreted as representing the official policies, either expressed or implied, of the Army Research Laboratory or the U.S. Government. The U.S. Government is authorized to reproduce and distribute reprints for Government purposes notwithstanding any copyright notation here on.}
    }{}
  \ifthenelse{\equal{#1}{CRA}}{\thanks{Research reported in this paper was sponsored in part by the Army Research Laboratory under Cooperative Agreement W911NF-17-2-0196. The views and conclusions contained in this document are those of the authors and should not be interpreted as representing the official policies, either expressed or implied, of the Army Research Laboratory or the U.S. Government. The U.S. Government is authorized to reproduce and distribute reprints for Government purposes notwithstanding any copyright notation here on.}}{}
  \ifthenelse{\equal{#1}{Conix}}{\thanks{This work was supported in part by the CONIX Research Center, one of six centers in JUMP, a Semiconductor Research Corporation (SRC) program sponsored by DARPA.}}{}
  \ifthenelse{\equal{#1}{NSFAvail}}{\thanks{This material is based upon work supported by the National Science Foundation under Grant No. 1705086}}{}
  \ifthenelse{\equal{#1}{Conix}}{\thanks{This work was supported in part by the CONIX Research Center, one of six centers in JUMP, a Semiconductor Research Corporation (SRC) program sponsored by DARPA.}}{}
  \ifthenelse{\equal{#1}{CPSSyn}}{\thanks{This material is based upon work supported by the National Science Foundation under Grant No. 1330118 and from a grant from General Motors.}}{}        
  \ifthenelse{\equal{#1}{NeTSLarge}}{\thanks{This material is based upon work supported by the National Science Foundation under Grant No. 1413978}}{}         \ifthenelse{\equal{#1}{NeTSSmall}}{\thanks{This material is based upon work supported by the National Science Foundation under Grant No. 1423505}}{}
}


\maketitle

\noindent\textbf{Abstract:}
%
Most deployed WAN Traffic Engineering (TE) systems use a logically centralized controller that periodically gathers traffic demands, runs a TE optimization or heuristic, and then programs the network.
At scale, these solutions can be sub-optimal, and can take minutes to react to demand changes or failures.
In this paper, we introduce \sysname{}, a system that reacts immediately to demand changes and failures, and delivers near-optimal solutions within seconds of a change.
\sysname{} builds on the theory of optimization decomposition to devise scalable, near-optimal, distributed TE solvers for path-based MLU and Max-flow problems.
In \sysname{}, each switch solves part of the optimization, and a central coordinator orchestrates the progress of the switches.
As such, a switch can trigger a re-optimization as soon as it notices a demand change or failure, enabling high reactivity.
\sysname{} scales to large WANs, and its compute requirements are well below the capabilities of modern WAN switches.
It also enables a new opportunity, edge-based TE, which can utilize resources more efficiently than today's path-based approaches.
On a testbed emulation of a 750-node WAN topology, \sysname{} can outperform the state-of-the-art by up to an order of magnitude.

\section{Introduction}
\label{s:introduction}

Traffic engineering (TE) is an indispensable component of today's wide area networks (WANs).
It enables cloud providers and ISPs to route \textit{demands} between each ingress-egress router pair efficiently and fairly, while ensuring good application perceived performance.
At the core of today's TE systems is an algorithm that routes demands across multiple pre-computed paths. It aims to achieve a global objective, such as maximizing total flow (henceforth, Max-flow) or minimizing maximum link utilization (henceforth MLU) without violating link capacity constraints.
TE is also used in cloud provider private WANs~\cite{krishnaswamy23:_onewan,denis23:_ebb,b4andafter,b4} and in data centers~\cite{poutievski22:_jupit}.
In these settings, TE algorithms must support additional features, such as traffic prioritization and fairness~\cite{swan,b4,b4andafter,krishnaswamy23:_onewan,denis23:_ebb}, or be designed to co-exist with other techniques like topology engineering~\cite{poutievski22:_jupit}.

\subsection{Centralized TE}\label{sec:centralized-te}
Most deployed TE algorithms run on a logically centralized controller (\cref{fig:te_loops}).
The controller periodically (\textit{e.g.}, every 5 minutes~\cite{swan,soroush,krishnaswamy23:_onewan}) (a) collects demands from switches and predicts future demand, (b) runs the TE algorithm on predicted demand, and (c) installs forwarding rules on switches based on the TE algorithm's output.
These rules \textit{split} the demand between each ingress-egress pair across pre-computed paths between those routers.
%
%
%
With a global view of traffic and topology, centralized TE can make routing decisions across the entire network and utilize network capacity efficiently.
However, there are two limitations.

\begin{figure}[t]
  \centering\scalebox{0.9}{
  \begin{tikzpicture}
    \node (A) {\includegraphics[width=1.6in,height=1.2in]{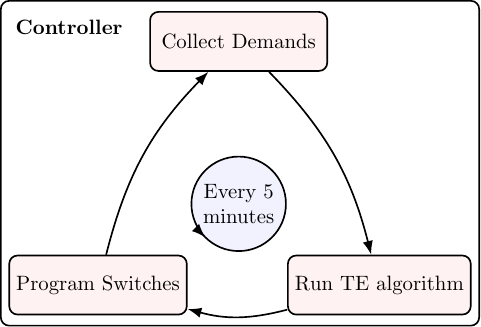}};
    \node[right=3mm of A] (B) {\includegraphics[width=1.8in,height=1.7in]{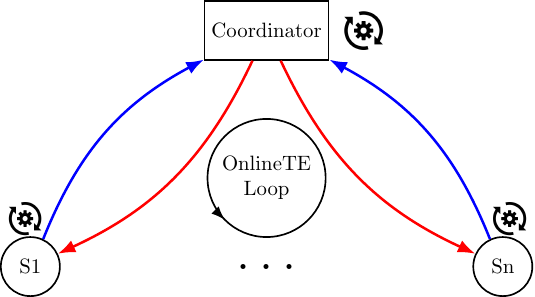}};
  \end{tikzpicture}}
  \caption{The classical TE loop (left) runs periodically on a controller. In \sysname{}'s distributed TE approach (right), switches notice demand shifts and failures immediately, and iteratively run a decomposed optimization together with a coordinator.\label{fig:te_loops} }
\end{figure}

\parab{They trade off optimality for scale.}
Optimal algorithms for allocating demands to paths have been widely studied~\cite{danna2012upward,danna2012practical}.
However, as networks scale, the time to compute these optimal allocations can increase substantially~\cite{soroush,ncflow}.
This can delay the response to demand changes or failures, and significantly impact availability.
To avoid this, most deployed TE algorithms use heuristics to scale~\cite{blastshield,krishnaswamy23:_onewan,denis23:_ebb,b4,b4andafter} which can exhibit poor worst-case behavior~\cite{metaopt}.
Some~\cite{swan,soroush} use approximation algorithms with bounded optimality gaps.

In response to this, a line of work has explored speeding up TE algorithms by contracting the network~\cite{ncflow}, partitioning~\cite{pop}, training ML models~\cite{perry23:_dote,teal_2023}, or decomposing the optimization~\cite{xu2025decoupledecomposescalingresource}.
Many of these achieve speed at the expense of optimality, and ML-based methods~\cite{perry23:_dote} can also exhibit poor worst-case behavior~\cite{learning-enabled}.

\parab{They are slow to react.}
Even with scalable TE algorithms, the overall TE control loop (\cref{fig:te_loops}) incurs non-trivial end-to-end latency.
As a result, deployed TE controllers typically re-optimize only once every few minutes~\cite{teal_2023,redTE}.
In today's networks, this can reduce availability.
Traffic demands can change over shorter timescales~\cite{soroush,redTE} than the periodicity of TE.
If these changes produce a traffic matrix outside the predicted demand used to compute the TE solution, the increased demands can overload links and cause packet drops.
Reaction to failures can also impact availability: when one or more links go down, the residual capacity may be insufficient to accommodate existing demands, so congestion might persist until the next iteration of the control loop.

%


\vspace*{-0.5ex}
\subsection{Distributed TE}\label{sec:distributed-te}

In response to these shortcomings, other approaches have considered \textit{distributed} approaches to TE.

TeXCP~\cite{kandulate} and MATE~\cite{mate} have explored an approach in which each ingress router repeatedly (a) measures congestion on demand paths, (b)  locally and independently adapts demand splits based on these measurements.
While these approaches guarantee stability, they can converge slowly and can be sub-optimal because routers do not explicitly coordinate when determining the local splits, but do so implicitly via path congestion measurements.
%

%
%

More recent work, dSDN~\cite{sigcomm2024:krentselskmnras24}, explores a different point in the design space, albeit in the cloud provider setting.
It is motivated by controller complexity, and is predicated on the availability of non-trivial compute on today's switches.
In dSDN, routers flood link state and demand changes, and each router periodically runs a TE algorithm to compute path splits.
While it can react faster than centralized TE, dSDN requires \emph{every} switch to solve TE for \emph{all} demands in the network.
Routers have modest compute, so dSDN can only run TE heuristics~\cite{b4} and can thus be sub-optimal.
Its reaction and convergence time is constrained by how quickly demand information can disseminate through the network and how quickly the TE algorithm completes.

\subsection{\sysname{}}\label{sec:onlineTE}

%
In this paper, we explore a new capability, \textit{near-optimal online TE} (\sysname{}, for short).
\sysname{} decomposes the TE algorithm into two parts (\cref{fig:te_loops}), one which, motivated by the availability of switch compute~\cite{sigcomm2024:krentselskmnras24}, runs on switches, and another on a centralized controller (\cref{sec:design}).
Instead of running the TE algorithm periodically, it initiates the TE computation as soon as any demand on any switch changes.
This re-computation finishes \textit{in a few seconds} and is near optimal even on a large WAN (\cref{sec:evaluation}).
At the end of the re-computation, each switch knows the traffic splits for its demands, so it can install them locally.

\sysname{} is currently designed for TE algorithms used in ISP settings, MLU and Max-Flow.
We have left extending to other objectives (prioritization, fairness) to future work.

Our design addresses the three shortcomings discussed above.
Unlike existing centralized or distributed approaches, \sysname{} can quickly compute near-optimal TE solutions even for large WANs because it leverages massive parallelism by distributing the computation across switches (\cref{fig:te_loops}).
\sysname{} also reacts much faster than centralized TE for two reasons.
First, \sysname{}'s algorithms can \textit{warm-start} from the most recent solutions, avoiding the cold‑start overhead that slows down centralized TE methods.
Second, switches compute and install the demand splits locally, whereas centralized TE must program the switches across the entire network, incurring significant latency~\cite{sigcomm2024:krentselskmnras24}.

%

%


\sysname{} achieves this by building upon a long line of work on optimization decomposition~\cite{boyd11:_distr}.
We explain how these decompositions work with a simple example.

\parab{Illustrating optimization decomposition.}
\sysname{} decomposes a TE optimization into two parts: a {\it local} component that each switch can execute independently, and a {\it global} component that runs on the central coordinator (\cref{fig:te_loops}).
To illustrate this, consider the 4-node topology in \cref{fig:dd_example} with two demands: \textit{Demand 1} from $1 \rightarrow 4$ of 4 units, and \textit{Demand 2} from $2 \rightarrow 4$ of 2 units.
%
%
Suppose we wish to determine path splits for these demands with the MLU objective.


\begin{figure}[t]
\begin{tikzpicture}[
    >=Latex,
    node distance=2cm,
    every node/.style={font=\sffamily},
    v/.style={
        circle, minimum size=4mm,
        draw=blue!80, fill=none,
        line width=1pt, inner sep=0pt,
        text=black, font=\bfseries
    },
    ecap/.style={midway, inner sep=1pt, fill=white, font=\small},
    thinedge/.style={line width=0.8pt},
    thickflow/.style={line width=2.0pt}
]

\node[v] (v1) at (0,0) {1};
\node[v] (v2) at (0,1.3) {2};
\node[v] (v3) at (1.3,0) {3};
\node[v] (v4) at (1.3,1.3) {4};
\draw[-,thinedge] (v1) -- (v2) node[ecap,] {2};
\draw[-,thinedge] (v2) -- (v4) node[ecap] {2};
\draw[-,thinedge] (v1) -- (v4) node[ecap] {4};
\draw[-,thinedge] (v1) -- (v3) node[ecap] {2};
\draw[-,thinedge] (v3) -- (v4) node[ecap] {2};

\draw[->,thickflow] (0.2,0.5) -- (.9,1.2)
    node[ecap, above=0pt, sloped, font=\bfseries] {4};
\draw[->,thickflow] (0.3, 1.6) -- (1.1, 1.6)
    node[ecap, above left=2pt and -2pt, font=\bfseries] {2};

\node[anchor=west] at ($(current bounding box.east)+(-0.3cm,-0.15cm)$) {%
\begin{minipage}{7cm}
\centering
\scalebox{0.65}{
\begin{tabular}{|p{15mm}|c|c|c|c|c|}
\hline
\textbf{Round} & \textbf{0} & \textbf{1} & \textbf{2} & \textbf{.} & \textbf{N} \\
\hline\hline
\textbf{Demand1} & (0, 4, 0) & (0.1, 3.4, 0.5) & (0.2, 2.8, 1) &  & (0.25, 2.25, 1.5) \\
\hline
\textbf{Demand2} & (2, 0, 0) & (1.75, 0.25, 0) & (1.5, 0.5, 0) &  & (1.25, 0.75, 0) \\
\hline
\textbf{Price1} & 2 & 1.85 & 1.7 &  & 1.5 \\
\hline
\textbf{Price2} & 4 & 3.65 & 3.3 &  & 3 \\
\hline
\textbf{Price3} & 0 & 0.5 & 1 &  & 1.5 \\
\hline
\textbf{MLU} & 1 & 0.925 & 0.825 &  & 0.75 \\
\hline
\end{tabular}}
\end{minipage}
};

\end{tikzpicture}
\caption{Illustrating optimization decomposition. \textit{Demand 1} is routed along paths $1-2-4$, $1-4$ and $1-3-4$, while \textit{Demand 2} is routed along paths $2-4$, $2-1-4$, and $2-1-3-4$. The splits in the above table are for paths in this order. Prices correspond to links or path segments shared between the demands, respectively $2-4$, $1-4$, and $1-3-4$.
 \label{fig:dd_example}\label{tab:rounds_cold} 
}
\end{figure}
\begin{figure}[t]
  \scalebox{0.6}{
\begin{filecontents*}{csvfiles/dd_warm.csv}
;Initial;Update;Round 1;Round 2;...;Round N
\textbf{Demand 1};(0.25, 2.25, 1.5);(0.25, 1.75, 1.5);(0.15, 1.85, 1.5);(0.05, 1.95,1.5);;(0, 2, 1.5)
\textbf{Demand 2};(1.25, 0.75, 0);(1.75, 0.75, 0);(1.65, 0.85, 0);(1.55, 0.95, 0);;(1.5, 1, 0)
\textbf{Price 1};1.5;2;1.8;1.6;;1.5
\textbf{Price 2};3;2.5;2.7;2.9;;3
\textbf{Price 3};1.5;1.5;1.5;1.5;;1.5
\textbf{MLU};0.75;1;0.875;0.8;;0.75
\end{filecontents*}
\csvreader[%
tabular = |l|c|c|c|c|c|c|,%
table head = \hline & \textbf{Initial} & \textbf{Update} & \textbf{Round 1} & \textbf{Round 2} & \textbf{...} & \textbf{Round N} \\ \hline\hline,%
separator = semicolon,%
late after line = \\ \hline%
]{csvfiles/dd_warm.csv}{}{\csvlinetotablerow}}
  \caption{When Demand 1 changes to 3.5 and Demand 2 to 2.5, the algorithm can \textit{warm-start} from the previous solution.} 
  \label{tab:rounds_warm}
\end{figure}

To solve this optimization in a distributed fashion, consider the following iterative approach, which proceeds in rounds.
In the first round, each switch initializes the splits, then sends these to a central coordinator.
The coordinator sets \textit{prices} on path segments on which the demands overlap, where the price reflects the relative utilization on the bottleneck link on the path segment.
In subsequent rounds, each switch updates the splits to lower the prices, then sends these to the coordinator, and the process repeats.
Eventually, the switches find splits that converge to the optimal MLU (0.75 in our example).

\cref{tab:rounds_cold} shows the splits (per path listed above) for each demand, the prices for the path segments, and the evolution of the MLU at each round.
Now suppose Demand 1 changes to 3.5 and Demand 2 to 2.5.
This approach can converge quickly since the computation can \textit{warm-start} from the previous solution (\cref{tab:rounds_warm}).




\parab{Challenges and Opportunities.}
The theoretical foundations for decomposing optimizations were developed decades ago~\cite{boyd11:_distr,bertsekas2015parallel}.
However, we know of no prior work that has exploited optimization decomposition to devise a completely \textit{distributed} algorithm for TE.

\cref{fig:dd_example} uses the method of \textit{dual decomposition}~\cite{Rush_2012} because the coordination between controller and switches via prices is easier to understand.
\sysname{}, however, uses a variant, the Alternating Direction Method of Multipliers (ADMM)~\cite{boyd11:_distr}, to decompose an optimization similar to \cref{fig:dd_example}.
ADMM provably converges quickly to a near-optimal solution.
Even so, adapting ADMM to the networking setting presents four major challenges (\cref{sec:design}).
%


First, TE algorithms often use different objectives (Max-Flow, MLU), and \sysname{} must support these.
In each case, determining the decomposition (\textit{i.e.}, which part of the algorithm runs on the switches, and which on the controller, as in \cref{fig:dd_example}) can be tricky and involve complex mathematics.

%
Second, the decomposition must scale well in the volume of information exchanged between the controller and the switches.
A generalized version of our example in \cref{fig:dd_example} requires a price per edge per demand.
In a large ISP network with hundreds of links and tens of thousands of demands, this can be prohibitively expensive.
The decompositions must also be computationally-efficient.
In general, in optimization decomposition, each component also runs an optimization.
The component that runs on the switches must be lightweight enough to run on switch CPUs.

Third, in a large network, switches and the controller might need many more iterations to converge than in our example (\cref{fig:dd_example}), which can result in large convergence times especially for continental-scale WANs.
\sysname{} must converge quickly.
In particular, the algorithm described in \cref{tab:rounds_cold} is \textit{synchronous}; the controller must wait for each switch to finish a round before proceeding to the next.
Large delays between a controller and a \textit{single} switch in a large WAN can adversely impact convergence.

\begin{table}[t]
    \centering
    \scalebox{0.65}{
    \begin{tabular}{|l|c|c|c|} \hline
      & Optimality-Guarantee & Highly-reactive & Edge-Based TE  \\ \hline\hline
     Optimal TE & \checkmark & $\times$ & $\times$ \\
     Heuristic~\cite{dempin,blastshield,ncflow,pop,soroush} & $\times$ & $\times$  & $\times$ \\
     Approximation~\cite{soroush,swan} & \checkmark & $\times$ & $\times$ \\ 
     Learning-based~\cite{teal_2023,perry23:_dote,neuralwan} & $\times$ & $\times$  & $\times$ \\
     DeDe~\cite{xu2025decoupledecomposescalingresource} & \checkmark & $\times$ & $\times$ \\ 
     COpter~\cite{coopter} & \checkmark & $\times$ & $\times$ \\ 
     dSDN~\cite{sigcomm2024:krentselskmnras24} & $\times$ & \checkmark & $\times$ \\
     TeXCP~\cite{texcp}, MATE~\cite{mate} & $\times$ & \checkmark & $\times$ \\ \hline \hline
     \textbf{\sysname{}} (ours) & \checkmark & \checkmark  & \checkmark \\ \hline
    \end{tabular}
    }
    \caption{Of the most closely related work, \sysname{} is the only near-optimal method, capable of reacting to changes online, and of supporting edge-based formulations.}
    \label{tab:comparison}
\end{table}

Finally, the optimization decomposition presents us with an interesting opportunity.
Today, most large-scale TE is \textit{path-based}, and uses a small number of pre-determined paths between each ingress-egress pair (usually 4 to 16~\cite{krishnaswamy23:_onewan,denis23:_ebb,soroush}).
The TE algorithm then determines optimal splits for these paths.
An \textit{edge-based} approach jointly determines paths as well as splits.
Because it is not restricted to a few paths, it can achieve better values of the objective (\textit{e.g.}, lower MLU).
Edge-based approaches are more computationally intensive, so they have not been considered practical for TE.
However, the use of a distributed optimization algorithm like ADMM effectively \textit{parallelizes} the computation, rendering edge-based TE feasible.
The key challenge is to ensure that the paths chosen by the edge-based TE are low-latency paths.

\parab{Contributions.}
We make the following contributions:
\begin{itemize}
\item We present the first known \textit{distributed solvers} for path-based (\cref{sec:basic-distributed-te}) and edge-based TE formulations (\cref{sec:edge-based-solvers}), for MLU and Max-Flow.
These solvers are both computation and communication-efficient; in particular, the switch-side computations require significantly less compute resources than that present in modern routers.
\item Our suite of solvers, \sysname{}, employs a unique hierarchical ADMM decomposition (\cref{sec:ensur-fast-conv}) common to both path-based and edge-based formulations that allows it to scale, while ensuring fast convergence despite large WAN latencies.
\item Our evaluation (\cref{sec:evaluation}) using a complete implementation\footnote{We plan to make the implementation publicly available.} of \sysname{}, on a testbed emulating the KDL WAN, the largest topology in the Internet Topology Zoo~\cite{itzoo}, demonstrates an order of magnitude performance improvement in some cases over centralized TE solvers.
\end{itemize}

\cref{tab:comparison} highlights how \sysname{} is unique among related work.
No prior work has demonstrated the ability to solve edge-based TE at scale.
Relative to fast centralized TE solvers\cite{ncflow,pop} or decentralized ones~\cite{sigcomm2024:krentselskmnras24,mate,texcp}, \sysname{} is near-optimal and can react much faster.
ML-based TE solvers~\cite{teal_2023,perry23:_dote} can be fast, but sub-optimal~\cite{learning-enabled}.
DeDe~\cite{xu2025decoupledecomposescalingresource} that employs an ADMM decomposition to demonstrate a fast, parallelizable solver for centralized path-based TE.
We discuss related work in more detail in \cref{sec:related-work}.

\parab{Ethics.} This work does not raise any ethical issues.


\section{\sysname{} Design}\label{sec:design}


In this section, we first introduce ADMM and then provide an overview of \sysname{}.
Subsequent sections describe our solvers' design.
Our descriptions of ADMM and our designs are informal; appendices contain mathematical details.


\subsection{Introduction to ADMM}\label{sec:introduction-admm}

ADMM~\cite{boyd11:_distr} solves an optimization problem by \textit{decomposing} it into sub-problems, then iteratively solving each sub-problem in turn.
Depending on the optimization problem and its decomposition, it may be possible to solve some of the sub-problems in a distributed manner.

\begin{figure}[t]
  \centering
  \includegraphics[width=1.0\columnwidth]{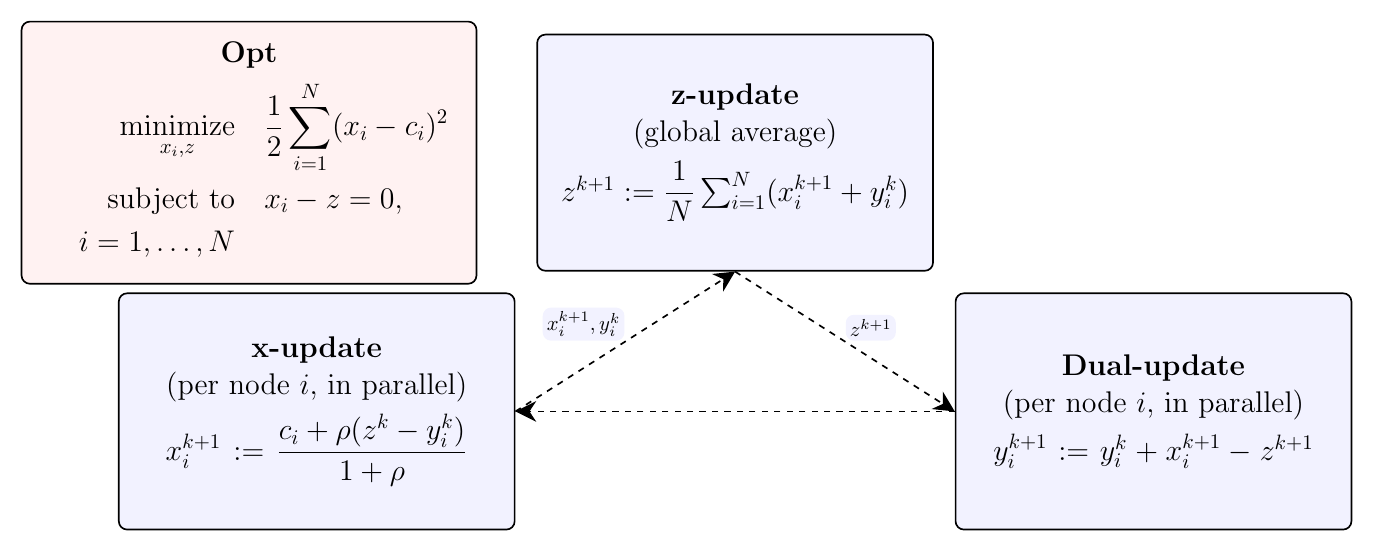}
  \caption{\label{fig:admm_consensus} ADMM decomposition for a simple Consensus problem.}
\end{figure}

\cref{fig:admm_consensus} illustrates an optimization problem for \textit{distributed averaging}, \textit{i.e.}, finding the mean of a set of values distributed across the network.
It is somewhat contrived---there are better ways of averaging---but  illustrates ADMM decomposition.
Consider a network with $N$ nodes, where each node $i$ holds a local value $c_i$ and a variable $x_i$.
Our goal is for every $x_i$ to converge to the arithmetic mean of the $c_i$s, enabling all nodes to compute the global mean in a distributed fashion.

In \cref{fig:admm_consensus}, \textbf{Opt}'s objective exploits the property that the sum of the squared deviations of any number $z$ from a set of $N$ numbers (\ie $c_i$s) is smallest when $z$ is the mean of the $N$ numbers.
The constraint ensures that all $x_i$s eventually converge to the mean $z$.

%

The fundamental problem is that no node in the system knows all the values of $c_i$ and of $z$.
ADMM allows nodes to iteratively converge on the global mean.
It does so by decomposing the problem into three steps  (\cref{fig:admm_consensus}).
These three steps execute iteratively one after another.
At the end of each iteration, intuitively, all nodes move a step closer to $z$, at a rate determined by a step size $\rho$.

In the $k+1$-th iteration, the \textbf{\xupd{}} step, moves $x_i$ towards the current estimate of the global mean $z^k$.
It does this using a \textit{dual variable} $y_i$ for each node, which captures, at any given instant, how far off a node's estimate $x_i$ is from $z^k$.
Then, each node conveys the updated value of $x_i$ to a \textit{central coordinator}, which updates the current estimate of $z^k$ (each node also sends its $y_i$ estimate to the coordinator), a step called the \textbf{\zupd{}}.
Subsequently, the coordinator sends the new estimate of $z^{k+1}$ back to all the nodes.
Each node first updates $y_i$ (the \textbf{\dupd{}}), then repeats the cycle again.
It is easy to see this \textit{ADMM loop} would eventually result in $x_i$s converging to the mean of $c_i$s.
%


This example highlights several properties of ADMM that we exploit for \sysname{}.

\begin{itemize}[leftmargin=*,noitemsep]
    \item Given an optimization formulation, an ADMM decomposition consists of three update steps, executed iteratively as described above~\cite{boyd11:_distr}.
    With rather mild assumptions, the iterations provably converge to an optimal solution; more important, for an appropriate choice of step size $\rho$, the computation gets very close to the optimal in a few steps~\cite{boyd11:_distr}.
    
    \item We can \textit{warm-start} the ADMM loop in \cref{fig:admm_consensus}. For example, if $c_i$s at some nodes change, the technique can start from the previous values of $x_i$ to converge to the new optimum.
    
    \item The decomposition in \cref{fig:admm_consensus} is amenable to a distributed implementation: all the nodes can perform the \xupd{} and \dupd{} steps \textit{in parallel}, while the \zupd{} is centralized.
    This requires modest \textit{communication}: at each iteration, each node sends two values ($x^{k+1}_i, y^{k}_i$) to the coordinator and receives a single value $z^{k+1}$.
    
    \item Each update has lower \textit{computational complexity} than \textbf{Opt}, which is a quadratic optimization.
    In this example, all the steps have closed-form solutions.
    In other cases, the \xupd{} and \zupd{} may require solving an optimization.
\end{itemize}

\vspace*{-0.5ex}
\subsection{\sysname{} Overview}\label{sec:sysname-overview}


\sysname{} provides a family of distributed TE solvers, each of which is both near‑optimal and online.
%
%
It supports both path‑based~\cite{redTE,teal_2023} and edge‑based TE formulations.
%
%
\sysname{} can also optimize two objectives widely used in TE: MLU and Max-flow (\cref{s:introduction}).



\parab{Requirements.}
In addition to being near-optimal, \sysname{}'s decompositions must enable solvers to be \textit{online}---when a demand change or link failure happens, the solver must (a) start re-optimizing the demand splits immediately, instead of periodically (\cref{s:introduction}), (b) the re-optimization must converge quickly and (c) when the re-optimization completes each switch must be able to \textit{effect} the revised demand splits by locally programming its flow tables; in other words, the decomposition should not require coordination between switches to install flow state.
All of these enable fast reactivity to changes.
In general, (b) may not always be possible with ADMM; in each iteration, the \xupd{} and the \zupd{} may require solving an optimization.
Heavyweight optimization solvers (\textit{e.g.}, those that use interior point methods for linear programs~\cite{gondzio_ipm_warm}) re-solve the optimization from scratch at each iteration, and can delay convergence.
\sysname{} must find more computationally-efficient decompositions, especially for the switch-side optimization, and leverage \textit{warm-starting} from previous solutions to speed up convergence.





\sysname{} should \textit{scale} to large networks~\cite{krishnaswamy23:_onewan}. 
There are two components to scaling:
\textit{Communication efficiency} demands that the communication cost of the distributed computation scales well with network size and the number of demands.
\textit{Computation efficiency} requires that the per-switch computations remain feasible given the computing capabilities of today's network devices.

Our solution must \textit{converge quickly even as the network scales}.
Each iteration can finish only after the coordinator has received updates from all switches, so overall convergence is constrained by the maximum latency between the coordinator and any switch.
As network size increases, so can this latency, and \sysname{} must incorporate techniques to converge quickly despite this increased latency.

Finally, we must preserve \textit{uniformity} to simplify implementation.
This means the ADMM components must be re-usable: where possible, \xupd{} or \zupd{} steps must be near-identical, or use the same type of solver.
Control structures must follow similar patterns, such as the coordinator-orchestrated pattern of~\cref{fig:admm_consensus}.

\section{The Edge-Based Solver for MLU}\label{sec:edge-based-solvers}

We first describe \sysname{}'s online edge-based solver for MLU, because this is the easiest to explain intuitively.
\cref{app:edge-based-form} contains the detailed mathematical descriptions. 

\parab{Edge-Based TE}
takes a set of demands as input, each representing traffic between a pair of switches.
It attempts to route these demands while \textit{minimizing} MLU.
Path-based TE assumes a pre-configured set of paths between every switch pair but edge-based TE jointly optimizes the paths and the allocations for every demand.
As a result, the edge-based problem is harder to scale, which has prevented its adoption in deployed systems.
If a scalable solution exists, it would strictly outperform (\textit{e.g.}, have lower MLU or higher total flow) path-based methods.

A solution to the edge-based TE problem that optimizes MLU must satisfy two sets of constraints.
%
%
Demand constraints ensure that every demand is \textit{fully routed}: the solver assigns sufficient paths and capacities for each demand.
Flow conservation constraints ensure that, for every demand, the total ingress traffic at every intermediate switch equals the total egress traffic: traffic does not disappear in the network.

\parab{A Strawman Distributed Edge-based TE.}
With an appropriate ADMM decomposition, edge‑based TE becomes tractable because the optimization can be parallelized across hundreds or thousands of switches.
However, identifying a decomposition that is both correct and amenable to distributed implementation is challenging.

A strawman approach is to apply an ADMM decomposition similar to that of \cref{fig:admm_consensus}, but with different variables and update steps.
The \zvar{} variable is a matrix representing the total flow per demand on each edge. 
The $y$ variable encodes the price of sending an additional unit of flow through an edge.
The \xvar{} variable is also a matrix, where the $(i,j)$-th entry denotes the amount of demand $i$ routed on edge $j$.

In each iteration, on the controller, the \zupd{} uses the current \xvar{} values to update \zvar{} so as to minimize MLU, and the $y$-update sets prices on each edge.
On each switch, the \xupd{} adjusts per-edge allocations to minimize the total price.
Specifically, each switch determines how to split each demand across its egress ports.
This moves demands away from heavily congested edges.
The magnitude of each update is controlled by the step size $\rho$ (as in \cref{fig:admm_consensus}).

The prices enable ADMM to alternately refine \xvar{} and \zvar{}, allowing each to be updated using a separate optimization.

Unfortunately, this strawman approach \textit{does not scale well}.
\zvar{} is a matrix whose dimension grows with the product of number of demands and edges, and \xvar{}$_i$ on each switch grows with the number of demands.\footnote{Each switch computes and sends only the columns of the \xvar{} matrix that correspond to its egress edges.} 
A large WANs with a thousand nodes~\cite{krishnaswamy23:_onewan} can have up to a million demands, so \zvar{} and \xvar{}$_i$ can be matrices with up to a billion and a million entries respectively.
Each ADMM iteration involves sending \zvar{} from the controller to all the switches and \xvar{}$_i$ from all the switches to the controller.
So, transmitting such large matrices significantly slows down convergence.
Furthermore, since transmitting \xvar{} and \zvar{} represents control traffic, this approach would incur significant control overhead.



\parab{Towards Practical Edge-based TE.}
We improve the communication-efficiency of the edge-based solver as follows.
Observe that a switch does not need to know the per-demand allocation on each edge for \textit{other switches}; it only needs to know the total flow allocated (or the utilization) on each link.
Accordingly, in \sysname{}, the \zvar{} encodes the total flow allocated to each edge.
As a result, the controller sends to every switch a vector whose size scales only with the number of edges rather than with the product of edges and demands.

To reduce the size of \xvar{}, instead of requiring each switch to compute split ratios for all demands, \sysname{} asks each switch to \textit{compute routing decisions only for the demands that originate at that switch}, but across all links.
This design has three advantages.
First, it reduces the size of \xvar{}$_i$ from the number of global demands to the number of edges.
Second, because allocations can be biased to lie on near‑shortest paths for low latency (see below), the resulting \xvar{} is sparse, making the actual communication cost far smaller.
Finally, at the end of the optimization, each switch (a) can program the demand splits locally (\textit{i.e.}, in a consensus-free manner~\cite{sigcomm2024:krentselskmnras24}) and (b) knows the paths allocated to the splits and can use source-routing to send packets~\cite{sigcomm2024:krentselskmnras24}.
This renders the edge-based solver practicable in WAN settings.

\begin{figure*}[t]
  \begin{minipage}[t]{0.33\textwidth}
    \centering
    \includegraphics[width=0.85\columnwidth]{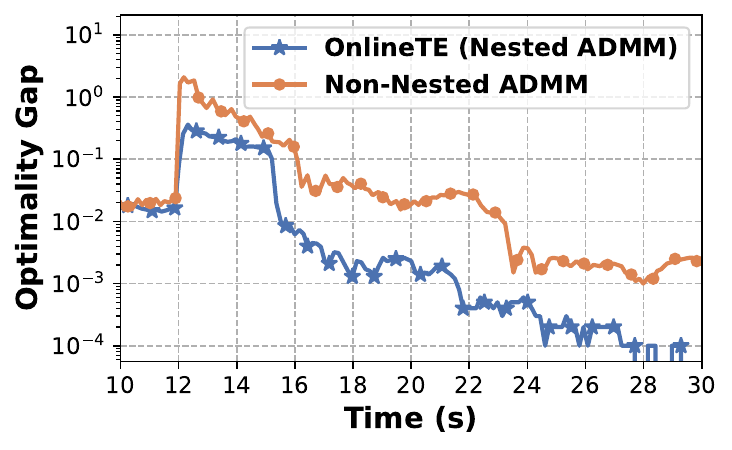}
    \caption{\sysname{}'s nested ADMM converges faster than a single ADMM loop. \label{fig:non-nested}}
  \end{minipage}
  \begin{minipage}[t]{0.33\textwidth}
    \centering
    \includegraphics[width=0.95\columnwidth]{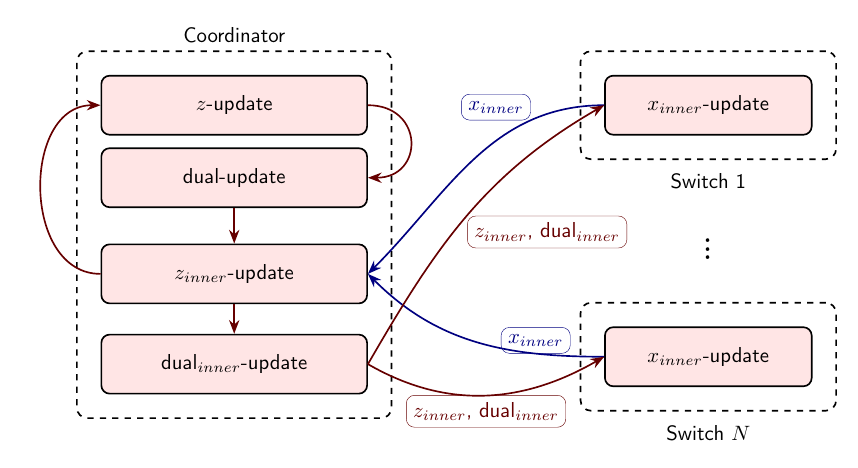}
    \caption{\label{fig:nested_admm} Nested ADMM for Edge-based MLU.}
  \end{minipage}
  \begin{minipage}[t]{0.33\textwidth}
    \centering
    \includegraphics[width=0.8\columnwidth]{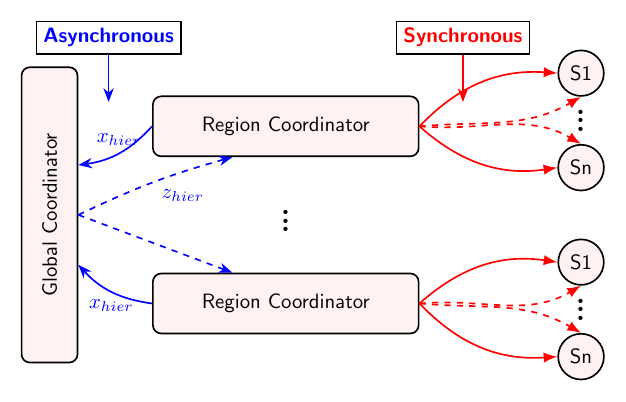}
    \caption{\sysname{}'s coordinator hierarchy with inter-region asynchronous ADMM. \label{fig:hierarchy} }
  \end{minipage}
\end{figure*} 

\parab{Nested ADMM: Improving convergence further.}
We have found that this decomposition converges slowly, because the switches are \textit{chasing a moving target}.
In one iteration, \zupd{} sets target utilizations on links that optimize MLU, and the \xupd{} moves switch allocations towards these prices.
However, in the next iteration, \zupd{} updates these target utilizations, and the switches must now readjust their allocations to move towards these new targets.



We have found that faster convergence can be achieved by introducing a \textit{nested ADMM} loop.
In the inner loop, switches take multiple steps to move towards a given, fixed, target.
Then, the outer-loop iteration runs once to re-adjust the target, and the inner-loop repeats again.
\cref{fig:non-nested} illustrates the benefit of this approach: the optimality gap (difference between the optimal and the distributed solver's MLU) decreases much faster over time (y-axis in log-scale) for \sysname{}'s nested approach than a non-nested approach.

In the \textit{inner} ADMM loop, the \xvar{} variable, denoted \xivar{}, is the same as the outer ADMM loop's \xvar{} variable which captures the assignment of demands to network edges.
However, \zivar{} is a variable that tracks the target mean of all commodities on an edge, and the \divar{} variable tracks the difference between this mean and each switch's assignment.
This effectively ensures that the \xiupd{} is decoupled from the \zupd{}.
Specifically, the inner ADMM loop converges to a target mean per edge (using $\rho_{inner}$ as the step size), the outer ADMM loop revises that target to move allocations closer to the optimal (using $\rho_{outer}$ as the step size), the inner ADMM loop revises allocations to meet the new targets, and so on.
Unlike the Consensus problem of \cref{fig:admm_consensus}, the inner ADMM loop solves a \textit{Sharing} problem~\cite{boyd11:_distr} where switches collaborate to share allocations on each edge.


This nested ADMM loop structure does not increase communication complexity relative to the non-nested decomposition.
Each switch send its \xivar{} to the coordinator; these contain edges on which the switch's demands are non zero.
The coordinator sends \zivar{} and \divar{} to the switches, each a vector with one entry per edge.

\cref{fig:nested_admm} depicts where each part of the inner and outer ADMM decompositions is executed, and what information the switches and the coordinator exchange.
Only the \xiupd{} runs on the switches; all other components run on the coordinator.
As discussed above, to achieve fast convergence, we need to iterate on the inner ADMM loop a few times before running one instance of the outer ADMM loop.


\parab{Improving Computation Efficiency.}
Unfortunately, in the inner ADMM decomposition, the \xiupd{} optimization is too heavy-weight.
It contains numerous variables and constraints to ensure flow conservation and demand satisfaction.
To reduce its computational complexity, we obtain a null space basis of the topology adjacency matrix, which requires a one-time singular value decomposition (SVD).
We then express the demand satisfaction constraint in terms of the basis vectors of this matrix.
This technique reduces the number of variables to $O(|E|-|V|)$, where $|E|$ is the number of edges and $|V|$ is the number of nodes.
When formulated this way, flow conservation does not require separate constraints.
\cref{app:edge-based-form} describes the mathematical details.

%
Without this optimization, solving \xiupd{} may require complex interior point methods~\cite{boyd_co} using a commercial solver~\cite{gurobi,applegate2022PDLP}.
Such methods cannot be easily warm-started, so our solver cannot run online.
With this optimization, we can use simpler methods described below.


\parab{Computational Complexity.}
This final decomposition (using nested ADMM and the null space decomposition) also has attractive computational complexity properties.
The most heavyweight component is the \zupd{}.
This is an inequality constrained, convex quadratic program (QP) with the number of variables equal to the number of edges.
For \zupd{}, we use PDLP~\cite{applegate2022PDLP}, a recent open-source solver designed for large-scale optimization problems with good warm-start performance.
In \sysname{}, the coordinator can be implemented in an SDN-controller, whose computational resources are sufficient to run PDLP.

The \xiupd{} is a very lightly constrained convex QP with $O(|E|-|V|)$ variables.
It can be parallelized across demands at a single node.
Each instance can be solved using simple algorithms like projected gradient descent~\cite{boyd_co}, which can easily be warm-started.
As we show in \cref{sec:evaluation}, these can run on moderately powerful switch CPUs~\cite{sigcomm2024:krentselskmnras24}.
The \ziupd{} has a computationally lightweight closed-form solution, and the dual variable updates for both ADMM loops involve simple vector operations.

\parab{Controlling path stretch.}
Edge-based TE can better optimize the global objective than path-based TE, but may allocate long paths to some demands, increasing latency.
To address this, our edge-based solver adds an $\ell_1$-norm regularizer to the \xiupd{}.
This has the effect of incentivizing the solver to return \textit{sparse} allocations (those with mostly zero entries).
A parameter $\kappa$ controls the degree of sparsity, and indirectly, the path stretch, since if demands are allocated to fewer edges, they are less likely to encounter long paths.
This regularizer applies both to edge-based Max-flow (\cref{sec:basic-distributed-te}) and MLU.
$\kappa$ is a function of the topology, not the objective.

The regularizer increases the computational complexity of the \xiupd{} slightly, so that, instead of projected gradient descent, it requires a more complex coordinate descent~\cite{wright2015coordinatedescentalgorithms} solver.
This can run on switch CPUs and can warm-start.
Regularization does not change communication complexity.




\section{Extensions: Max-flow and Path-based TE}\label{sec:basic-distributed-te}

\parab{The Edge-based Max-flow solver.}
Edge-based Max-flow requires relatively small changes to the nested ADMM structure of \cref{fig:nested_admm}.
Specifically, it introduces a new variable $\beta$ to the \xiupd{}, a vector of dimensionality equal to the total number of demands.
$\beta_k$ specifies what fraction of the demand is currently being routed.
This is paired with an $\alpha$ vector in the \zupd{} which tracks target fractions.
These terms appear in the dual variables as well.

These additions do not significantly increase the communication cost.
After the \xiupd{}, the switch must send the $\beta$ values for its own demands to the controller.
The number of demands at a switch is proportional to the number of nodes in the network.
The computational complexity of all components remains the same as for MLU, except for the \xiupd{}.
This is a QP with a coupling term (\xivar{} and $\beta$ are coupled), for which we use coordinate descent~\cite{wright2015coordinatedescentalgorithms} which can also be warm-started.


\parab{Path-based TE.} 
%
%
\sysname{} includes MLU and Max-flow solvers for path-based TE.
\cref{app:path-based-form} describes the mathematical details underlying these solvers.

Both use the nested decomposition described in \cref{fig:nested_admm}.
More important, the \zupd{} in the outer ADMM loop is identical to that in edge-based TE.
This is possible because a path-based assignment can be converted into an edge-based assignment.
There are two benefits to this.
It allows us to re-use components from edge-based solvers thereby addressing our uniformity requirement.
Moreover, it preserves the communication complexity of edge-based solvers which scales with the number of edges.

The inner ADMM loop changes slightly.
Specifically, \xivar{} now encodes demand splits per path instead of per edge, and \zivar{} encodes the capacity headroom available per path.
The dual variables do not change.
Finally, the complexity of \xiupd{} changes slightly.
Recall that this step uses projected gradient descent.
%
%
In the path-based solvers, the projection step is more expensive because the \xiupd{} problem is more constrained; the projection for edge-based TE in contrast is very trivial.
Despite this, the path-based solver is still faster, as it handles much fewer variables.

Finally, the inner ADMM converts the path-based assignments to an edge-based assignment.
To do this, in practice, each switch could monitor the edges on its pre-configured paths using MPLS traceroute.
In settings where tunnels are configured by an SDN controller~\cite{denis23:_ebb,krishnaswamy23:_onewan}, the switch agent would have this information.

\section{Ensuring Fast Convergence at Scale}\label{sec:ensur-fast-conv}

As described so far, the inner ADMM loop is \textit{synchronous}; the coordinator must wait for all switches to respond before processing a \zupd{}.
As the network scales, the time to complete one iteration is a function of the maximum latency between a switch and a coordinator.
A \textit{straggler} can significantly slow convergence, a problem faced by many synchronous distributed computations (\textit{e.g.}, LLM training).

\parab{Asynchronous ADMM.}
To address this drawback, prior work~\cite{zhang2014asynchronous,Wei2013ADMMConvergence} has considered methods that introduce asynchrony in ADMM loops.
In these methods, the coordinator runs a \zupd{} even without receiving updates from all switches.
For \sysname{}, we have adapted the \textit{partial-barrier} asynchronous ADMM technique designed for consensus problems~\cite{zhang2014asynchronous} to our inner ADMM loop which solves a sharing problem.
In partial-barrier, the coordinator can apply a \zupd{} after a \textit{minimum} of $k$ nodes respond.
It then sends the updated \zvar{} only to the responders.
In addition, any given node must respond at least once within $\tau$ iterations.
Partial-barrier still ensures convergence to the optimal~\cite{Wei2013ADMMConvergence}, and convergence time is proportional to $\tau$ and inversely proportional to $k$.
It can be used in all the solvers discussed above, edge-based or path-based, for MLU and Max-flow.

Unfortunately, even this asynchronous ADMM is insufficient to ensure fast convergence.
On a large WAN with over 700 nodes (\cref{sec:evaluation}), we were unable to observe convergence with the partial-barrier method unless $k$ was very large.

\parab{Exploiting Network Structure with Hierarchical ADMM loops.}
%
%
Due to geographical constraints, switches naturally cluster into \textit{regions}, with the property that switches within a region (\textit{e.g.}, a metro area or a state) are closer to each other than to switches in other regions. 

To speed up convergence, \sysname{} structures large-scale edge-based or path-based TE problem as a \textit{hierarchy of ADMM loops} (\cref{fig:hierarchy}).
Within a region, a region coordinator and switches within the region run the nested ADMM loop of \cref{fig:nested_admm}.
A second \textit{inter-region ADMM loop} between a global coordinator and the region controllers coordinates inter-region allocations. 
All intra-region ADMM loops are synchronous, and the inter-region ADMM loop is asynchronous.
If one, or a few regions are stragglers, asynchronous ADMM will still permit fast convergence on allocations that impact only the non-straggler regions.
For example, in path-based TE, if a path traverses regions A, B, and C, a straggler region D cannot impact the convergence of allocations on that path.


This inter-region ADMM loop solves a consensus using a structure identical to \cref{fig:admm_consensus}.
Each region coordinator (\cref{fig:hierarchy}) runs an \xhupd{}, where the \xhvar{} is a vector which contains the total flow allocated for traffic originating within this region on all edges in the network.
The global coordinator runs the \zhupd{}, in which the \zhvar{} is an $R \times |E|$ matrix, where $R$ is the number of regions.
The $(i,j)$-th entry in this matrix is the amount of flow on edge $j$ that region $i$ may \textit{not} allocate, \textit{i.e.}, it is reserved for other regions.
Intuitively, the \xhupd{} step allows regions to use available capacity without violating capacity constraints.
A region does not need to know how many demands each other region has, and how these demands are allocated.
The \zhupd{} step sets reserved capacities to bias allocations towards the global objective (MLU or Max-flow).
The dual variable incentivizes \textit{lagging} regions---those whose allocations have not fully utilized available capacity---to catch up.

The inter-region loop's \xhvar{} and \zhvar{} are the same for all solvers.
Different solvers would update these variables differently, depending on the problem or objective.
The communication complexity of the inter-region ADMM loop is proportional to $R \times |E|$; each region coordinator sends and receives a vector of length $|E|$.
In a network with a thousand edges, this information is about 4-8 KB.
The computational complexity of the \xhupd{} and the \zhupd{} is also minuscule.
In fact, these complete quickly enough that we do not need a solver with warm-start capabilities.
Moreover, these updates run on coordinators that can be co-located with SDN-controllers with sufficient compute capacity.

\section{Failure Handling}\label{sec:failure-handling}

\parab{Switch or Link failures.}
In addition to demand changes, \sysname{} can also react immediately to link and switch failures and converge to near-optimal allocations.
Consider a link failure.
In path-based TE, if a switch has a path traversing the failed link, it can detect the failure using BFD~\cite{bfd_rfc}, set the demand allocation on that path to be zero, redistribute that allocation locally, then trigger the ADMM loops to ensure convergence.
This approach would also apply to TE systems that pre-configure backup paths~\cite{denis23:_ebb,krishnaswamy23:_onewan}; prior to the failure, \sysname{} would not allocate demand on these paths, but would do so after the failure.
In edge-based TE, a link failure changes the topology adjacency matrix, and the switch re-factorizes the previous allocations to the new adjacencies.
This re-factorization percolates to other switches (and other regions) during the ADMM iterations, resulting in a near-optimal allocation.
A switch failure can be treated as the simultaneous failure of its attached links, and the steps described above would apply.

\parab{Coordinator failure.}
When a coordinator fails, a backup coordinator can resume from a \textit{snapshot} of the ADMM computation.
After every iteration, region coordinators can snapshot the most recent \xivar{}s from all the switches, the \zivar{} and the outer-loop \zvar{}, and its own inter-region \xhvar{} as well as all dual variables.
The global coordinator can snapshot, after every iteration, the most recent \xhvar{}'s from the regions, the \zhvar{}, and the dual variable.
Switches do not need to snapshot state; we assume the switch's ADMM computation shares fate with the switch.

\parab{Packet losses.}
Switch-to-coordinator communication can use RPC, as can coordinator-to-switch communication in the inter-region loop.
In regions with many switches, using RPCs between a coordinator and each switch can introduce overhead, so \sysname{}'s current implementation uses multicast.
This is possible because \zivar{} is small (about 4-8KB in a network with a thousand links).
In this case, to counter packet losses, the coordinator repeats the multicast transmission a few times, then retransmits if some switches do not respond within a fixed timeout.

\section{Evaluation}\label{sec:evaluation}

In this section, we compare \sysname{} against several baselines, demonstrate its ability to solve edge-based TE at scale, and quantify the importance of its hierarchical design.


\subsection{Methodology}\label{sec:methodology}

\parab{Implementation.}
We have implemented all of \sysname{}'s solvers in 22K lines of Python code.
Our implementation uses common BLAS libraries (available via \texttt{Numpy}) for switch and coordinator computations, except for the path-based solver which uses JIT compilation via \texttt{Numba} to optimize a sparse matrix multiplication.

\parab{Testbed.}
In all of our evaluations, we run \sysname{} on the SPHERE testbed~\cite{sphere}.
On this testbed, we materialize switches and controllers for two WAN topologies (described below) as separate virtual machines running a light-weight Debian Bullseye operating system.
Each VM runs on an AMD EPYC 7702 2.0 GHz processor.
We provision each switch to have only 2 cores and 2 GB of memory; as such, they have less compute than modern WAN switches (Arista 7816 switches similar to the ones used in~\cite{sigcomm2024:krentselskmnras24} currently have 64 GB of DRAM and 8-core processor at 2.0 GHz~\cite{arista-datasheet}).

\parab{Topologies.}
We use two publicly available topologies from the Internet Topology Zoo~\cite{itzoo}.
Most of our experiments use the largest topology in this dataset, KDL, which has 754 nodes and 1790 edges and is spread across the east coast of the US (\cref{fig:topologies}).
We also evaluate the importance of hierarchy using an inter-continental topology, Cogentco, with 190 nodes and 486 links (\cref{fig:topologies}).
On the testbed, we emulate these topologies by ensuring that the latency on every link in the topology matches the propagation latency between the routers\footnote{More precisely, we compute speed-of-light propagation delay between router geo-locations.} at the ends of the link.
\textbf{\textit{In all of our experiments, we run \sysname{} on these emulated topologies on the testbed.}}

\parab{Baselines.}
We compare against 3 baselines: POP~\cite{pop} with 16 sub problems, DeDe~\cite{xu2025decoupledecomposescalingresource}, and NCFlow~\cite{ncflow}.
These represent different methods to speed up centralized TE computations.
Our \textbf{\textit{comparisons against the baselines is idealized}}: to match current practice, we invoke these baselines every 5 minutes, but we do not account for demand collection and switch programming latencies (\cref{fig:te_loops}).
We compare these baselines against \sysname{}'s path-based solver, since they cannot solve edge-based TE.
All path-based experiments use 16 pre-configured paths.
For other baselines listed in \cref{tab:comparison}, we could not find the code or, for the learning-based ones, models trained on our topologies and traffic matrices.

\parab{Traffic matrices.}
Since no publicly available traffic matrices exist for KDL and Cogentco, we
follow prior work~\cite{ncflow,soroush} and synthesize traffic matrices using Uniform, Gravity and Bimodal distributions.
We evaluate three network load levels by scaling each traffic matrix so that the optimal path-based MLU is 0.1 at Low load, 0.8 at Medium load, and 1.1 at High load.
Thus, at Low load, most approaches should be able to satisfy demands.
At Medium load, some heuristics may fail to satisfy demands, and, at High load, even the optimal solver cannot find a demand-satisfying assignment.

\parab{Emulating traffic dynamics and failures.}
We begin each experiment with a traffic matrix, and run \sysname{} for 10 minutes.
Some experiments evaluate traffic dynamics, and others evaluate link failures.
To model aggressive traffic variation, we select 5\% of the demands every 20 seconds, and perturb (increase or decrease) them randomly between the minimum and maximum entries of the initial traffic matrix.\footnote{We do not emulate actual traffic on the testbed, since it does not have enough capacity to carry traffic.}
This approximately preserves the Low, Medium, High distinctions.
%
%
To approximately preserve locality in the Gravity model, we sample another Gravity matrix every 20 seconds and copy 5\% of its entries (randomly chosen) into the current traffic matrix.
Over a 5-minute interval, nearly half the entries change by at least 50\% on average, comparable to dynamics observed in large WANs (nearly 40\% within a 5-minute interval~\cite{soroush}).

In other experiments, we inject $K$ random link failures every 5 minute, with $K$ ranging from 1 to 7, more than some recent prior work (e.g., 1-2 failures in~\cite{teal_2023}). 


\parab{Metrics.}
For MLU-based problems, the solution quality is measured by the MLU.
For Max-Flow, solution quality is determined by \textit{Demand Satisfaction}, defined as the ratio of total routed flow to total demand.
However, since \sysname{} adapts online to demand changes and failures, but the baselines do not, we use a \textit{regret} metric that captures the performance of the solver over time (instead of at a single instant).

We define two kinds of regret: \textit{objective regret} and \textit{capacity regret}
(\cref{sec:form-defin-regr} describes these formally).
To understand objective regret, consider an MLU experiment in which we induce traffic changes every 20 seconds.
At each traffic change, we compute the instantaneous \textit{optimal} MLU, and compute, for \sysname{} and the baselines, the instantaneous MLU optimality gap (the difference between the baseline's MLU and the optimal).
For other baselines, this instantaneous gap remains the same until the next demand change (\eg after 20 seconds), or until when they recompute a new solution (e.g., after 5 minutes).
For \sysname{}, this instantaneous gap remains until it converges after a demand change, at which point it changes because \sysname{} will have computed a near-optimal allocation.
The objective regret is the integral (the area under the curve) of the instantaneous gaps.
We compute objective regret for Max-flow in the same way, with the appropriate definition of optimality gap.

To understand capacity regret, consider a Max-Flow solver.
When a demand increases, it may cause a capacity violation until the solver installs a new allocation.
To capture this, at each 20-second mark, we compute the amount by which MLU exceeds 1.
Capacity regret is the sum of these across the duration of the experiment.
These definitions are consistent with those used in convex optimization~\cite{lei-online-regret, mahdavi-online-regret}.


Finally, when evaluating \sysname{}'s edge-based solver (\cref{sec:eval-edge-based-solver}), we also report the \textit{average stretch} per commodity, defined as the sum of path lengths used by a commodity, weighted by the demand split routed on that path and normalized by the shortest path.

\begin{figure*}[t]
  \centering
  \includegraphics[width=0.9\textwidth]{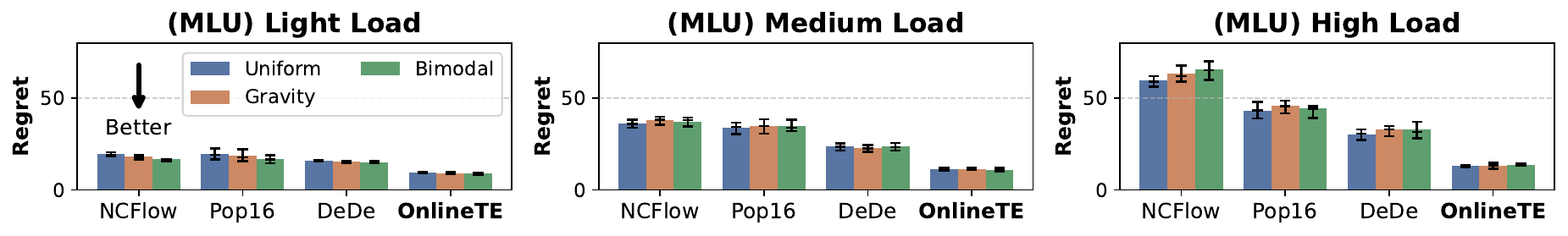}
  \caption{\label{fig:mlu_bar} Distribution of final MLU objective regret with different network load and demand distribution.}
\end{figure*}

\parab{Experiment setup.}
Each run of an experiment lasts 10 minutes.
For a given setting (traffic matrix type or link failure), we conduct 15 runs, and present metric value distributions (mean, median, p25 and p75) across those runs.

\subsection{Baseline Comparisons}\label{sec:baseline-comparisons}

We compare \sysname{} against baselines for demand changes and failures for path-based MLU and Max-flow.

\subsubsection{Demand Changes}\label{sec:demand-changes}

When demands change every 20 seconds, \sysname{} immediately starts re-optimizing its allocations.
Until it converges, it incurs regret.
If it does not converge before the next demand change, as sometimes happens, we warm-start \sysname{} with the updated demand at that instant.
Other approaches do not re-optimize, and therefore accumulate regret for the entire 5-minute interval. 
This section quantifies the differences between these approaches.



\parab{Path-based MLU.}
\cref{fig:mlu_bar} shows the distribution of objective regret across all baselines for the three different traffic loads on KDL.
For MLU, any capacity violation is already captured in the objective regret, so we do not need to compute capacity regret.
For each approach, the figure shows the mean (the height of the bar) and notches representing the median, the 25th percentile and the 75th percentile.

At all loads, \sysname{} has non-zero objective regret, for two reasons.
First, when a demand change occurs, it incurs regret until it converges, which takes a few seconds.
Second, we run ADMM until it reaches 1\% of the optimal.
Its objective regret increases slightly with load, and there is little variability across experiments.

DeDe computes an optimal allocation every 5 minutes.
For all demand changes that occur in the interim, its allocation can be suboptimal, so its objective regret is higher than \sysname{}'s (about 2$\times$ at low load, and up to 3$\times$ at higher loads).
Relative to DeDe, \sysname{} is able to lower objective regret by quickly reacting to demand changes.
\cref{fig:mlu_trace} depicts this graphically; DeDe's MLU changes when traffic demands change, but remains flat in between changes, except at the 5 minute mark when it recomputes its solution.
\sysname{} continuously re-optimizes after every demand change and is often able to converge before the next demand change (the flat sections at the bottom of the blue curve).
\begin{figure}[b]
  \centering
  \includegraphics[height=0.5\columnwidth]{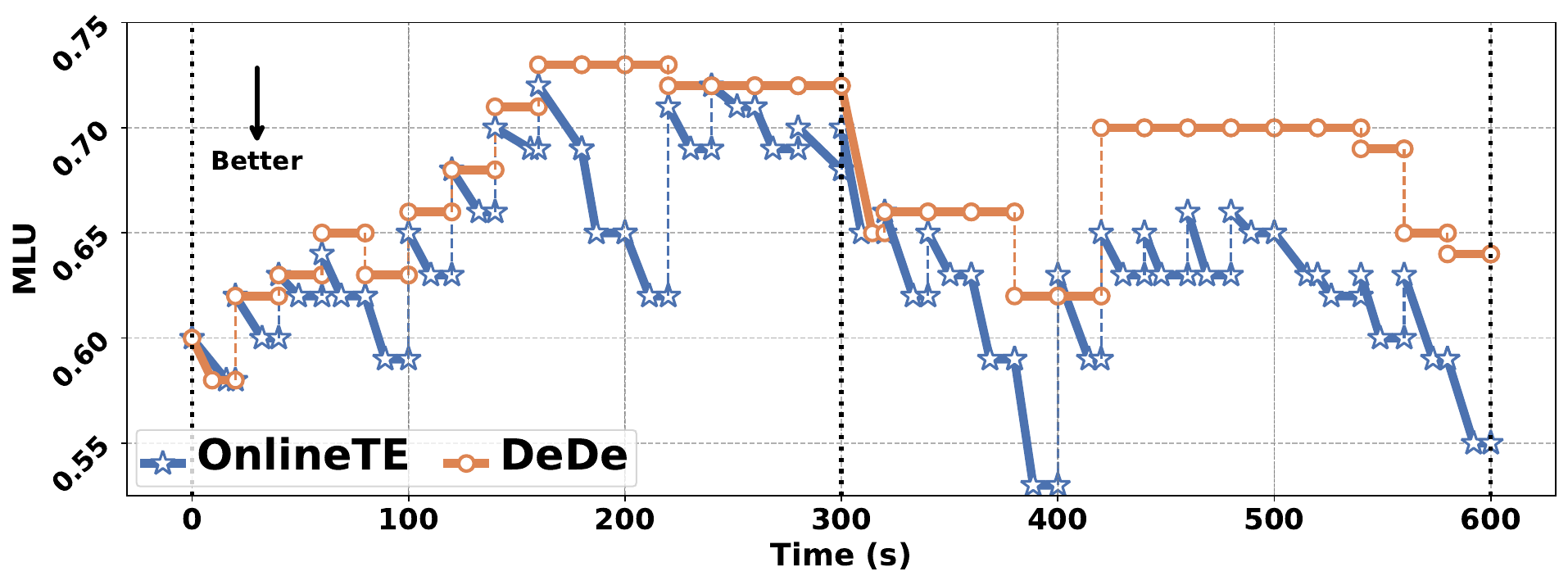}
  \caption{\label{fig:mlu_trace} Trace of MLU objective for one of our against DeDe.}
\end{figure}

NCFlow and POP use heuristics to determine allocation; they partition the problem space, solve sub-problems near-optimally and combine the solutions heuristically, resulting in faster, but sub-optimal, allocations.
At low traffic loads, they compare well with DeDe, but at higher traffic loads, their inherent sub-optimality results in regrets that may be 2$\times$ that of DeDe and up to 8$\times$ that of \sysname{} at high load.

The choice of traffic distribution has minimal impact on regret across traffic loads and approaches.
\sysname{} is almost insensitive to the traffic model, while at higher loads, NCFlow shows some variability.
These differences are likely due to the topological structure of KDL.

\begin{figure}[t]
  \centering
  \includegraphics[width=0.5\textwidth]{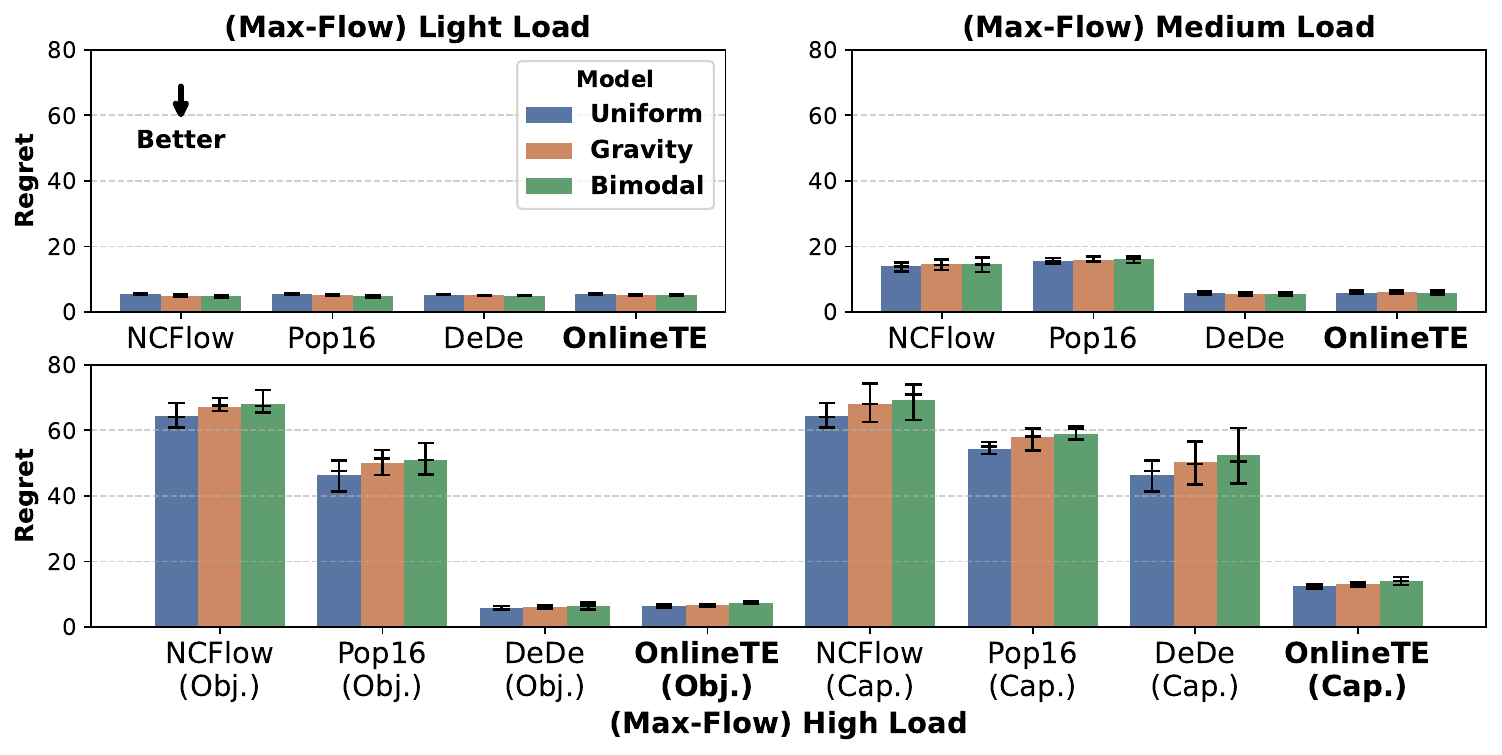}
  \caption{\label{fig:maxflow_bar} Distribution of final max-Flow objective regret.}
\end{figure}

\parab{Path-based Max-flow.}
\cref{fig:maxflow_bar} shows the regret for the Max-flow objective on KDL.
Across all traffic loads, \sysname{} has extremely low objective regret.
Interestingly, across low and medium traffic loads, \sysname{} and DeDe have comparable regrets.
For both, we compute solutions within 1\% of the optimal, but in theory, DeDe's objective regret should have been higher since it does not react immediately (as is the case for MLU at light and medium loads, \cref{fig:mlu_bar}).
It is comparable because the objective regret for Max-flow does not capture this transient behavior and does not account for capacity violations.
At low and medium loads, DeDe is near-optimal so it can route all demands, including the transients, so the optimality gap does not change when traffic demands change (as it does for MLU in \cref{fig:mlu_trace}).

At low loads, NCFlow and POP are comparable to \sysname{} because the network has enough capacity to route offered load even with heuristics.
At medium loads, their sub-optimality is evident, and their objective regret is 2-3$\times$ that of DeDe and \sysname{}.

At high-load, \sysname{}'s superior objective regret becomes more evident (left half of the bottom plot in \cref{fig:maxflow_bar}).
NCFlow's objective regret is more than 10$\times$ that of \sysname{}, and POP's is about 7$\times$.
But, DeDe's performance compares well with \sysname{}.
That is because the Max-flow objective regret does not penalize DeDe for not reacting to transients and for overloading links (as explained above).

The difference between the two shows up in \textit{capacity regret} (right half of bottom plot of \cref{fig:maxflow_bar}).
\sysname{} re-optimizes it solution in response to demand changes and incurs fewer capacity violations.
DeDe's capacity regret is about 4-5$\times$ that of \sysname{}.
On the other hand, at high loads, NCFlow and POP have very high capacity and objective regrets.

\subsubsection{Link Failures}\label{sec:link-failures}

In this section, we evaluate \sysname{}'s performance under failures for path-based MLU on KDL.
Every 20 seconds, we randomly fail a link with probability $p$.
We choose $p$ so that, on average, there are $m$ link failures, $m=1,3,5,7$, within a 5-minute interval.

When a link fails, a centralized TE solution usually invokes Fast Re-Route (FRR) to mitigate packet losses.
For example, if a link failure causes one or more of the pre-computed paths to fail, OneWAN~\cite{krishnaswamy23:_onewan} re-distributes its traffic among the remaining paths in proportion to the programmed splits on those paths.
We do the same for \sysname{}; it invokes FRR, then immediately initiates a re-optimization.

We compare this against an approach that starts with the optimal, and simply applies FRR but does not re-optimize.
\cref{fig:linkfail_mlu}(lower) shows the behavior of the two approaches during one of our experiments in which we induced 3 failures (at 60s, 180s and 200s).
In these experiments, we do not introduce any demand changes, since it would be difficult to dis-entangle \sysname{}'s behavior under failure from its response to demand changes.
These experiments use medium load and a uniform traffic matrix.

\cref{fig:linkfail_mlu}(upper) shows the objective regret for MLU as a function of the number of failures.
For 1-3 failures, FRR has 4-10$\times$ the regret of \sysname{}.
With more failures, the gap narrows slightly because \sysname{} can take longer to recover when a failure completely invalidates the prices (\textit{i.e.} dual variables) on the failed path.
In particular, if the failed link was heavily utilized, \sysname{} can take several iterations to converge to the optimal.
This is evident in \cref{fig:linkfail_mlu}(lower); after the first and third failures, \sysname{} takes 30-40 seconds to converge.

We expect to be able to lower regret with some optimizations.
For example, instead of using the FRR strategy before re-optimizing, \sysname{} can split traffic based on ADMM link prices.
Since these indicate congestion, \sysname{} could spread traffic to avoid congesting already congested links.
Even without this optimization and others which can lower convergence delays, which we defer to future work, \sysname{} clearly outperforms FRR in these failure regimes.

\begin{figure}[t]
    \centering
    \subfigure{
    \includegraphics{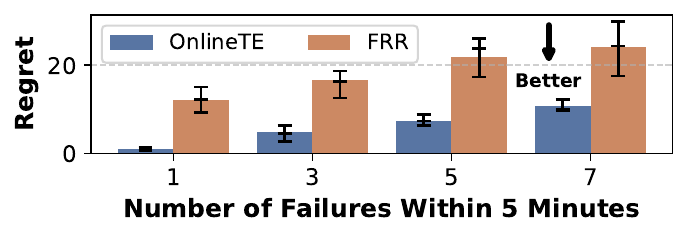}
    } \\
    \subfigure{\includegraphics{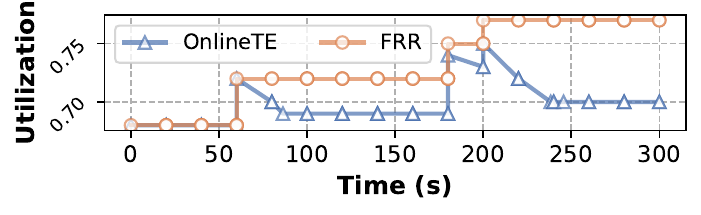}}
    \caption{Path-based MLU in presence of link failures. Upper figure: Objective regret. Lower figure: MLU over time with 3 failures at 60, 180 and 200 seconds.}
    \label{fig:linkfail_mlu}
\end{figure}
%


\subsection{Edge-Based Solver and Sparsity}\label{sec:eval-edge-based-solver}

A major contribution of \sysname{} is enabling edge-based TE optimization at scale.
In this section, we demonstrate through experiments that edge-based TE can significantly outperform path-based TE, especially under high-load.

For this experiment, we run both edge-based TE and path-based TE on KDL.
For edge-based TE, we sweep over 7 different $\kappa$ values from $1e^{-10}$ to $5e^{-07}$; $\kappa$ controls the sparsity of the allocation (\cref{sec:edge-based-solvers}), and higher $\kappa$ values constrain the demands to fewer paths, thereby resulting in lower \textit{stretch}.

\cref{fig:sparse_pareto} depicts the results.
In this figure, each dot or cross represents one experiment, either of \sysname{}'s edge-based solver at a $\kappa$ value indicated by the corresponding color (see legend) or its path-based solver.
A cross indicates the MLU for the median stretch, and the dot the MLU at the p95 stretch from that experiment.
Each large circle represents the average of all the p95 stretches and their corresponding MLUs across all experiments for the same value of $\kappa$ (or across all path-based solutions).
We then fit a Pareto frontier, shown in \cref{fig:sparse_pareto}, across these large circles; this represents the trade-off between latency and MLU on KDL for the edge-based solver at high load.

The figure dramatically illustrates the benefit of edge-based solvers.
The path-based solver, at high-load, exhibits an MLU of about 1.1, with a stretch of about 1.8.
This is well-beyond the Pareto frontier of \sysname{}'s edge-based solver.
\sysname{} can achieve a feasible high MLU of about 0.985 with a p95 stretch of 1.5, or 0.965 MLU with a slightly higher p95 stretch than path-based approaches.

While more extensive experiments can quantify the differences between edge-based and path-based TE, and how these depend on offered load and topology, \cref{fig:sparse_pareto} already demonstrates that (a) path-based TE can be far from the Pareto frontier of edge-based TE, and (b) that \sysname{} can solve for edge-based TE the scale of WANs with 750 nodes.




\begin{figure}
    \centering
    \includegraphics[width=0.9\columnwidth]{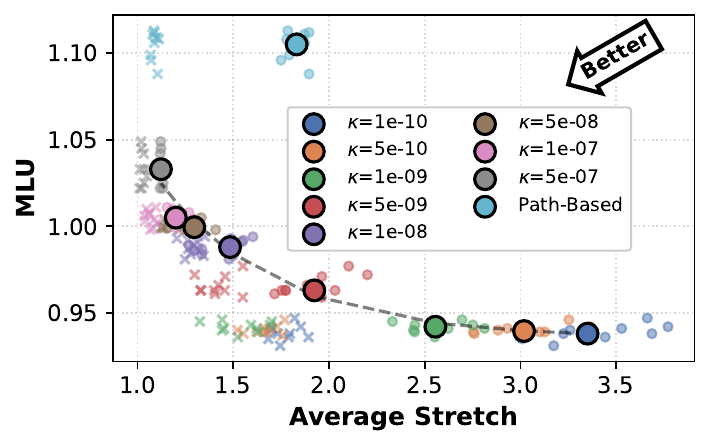}
    \caption{Edge-Based \sysname{} with different sparsity coefficients. We measure the MLU and average delay median and p95. Cross-marks indicate the median and the dashed line indicates the approximate Pareto frontier of tail latency against utilization.}
    \label{fig:sparse_pareto}
\end{figure}




\subsection{The Importance of Hierarchy}
\label{sec:eval-hier}


To demonstrate the importance of hierarchical ADMM (\cref{fig:hierarchy}), we compare \sysname{} against (a) a purely synchronous ADMM and (b) an asynchronous ADMM both with no hierarchy.
\sysname{} uses synchronous ADMM within regions, and asynchronous ADMM between them.


{
\setlength{\abovecaptionskip}{0pt}
\begin{figure}[b]
    \centering
    \includegraphics[width=\columnwidth]{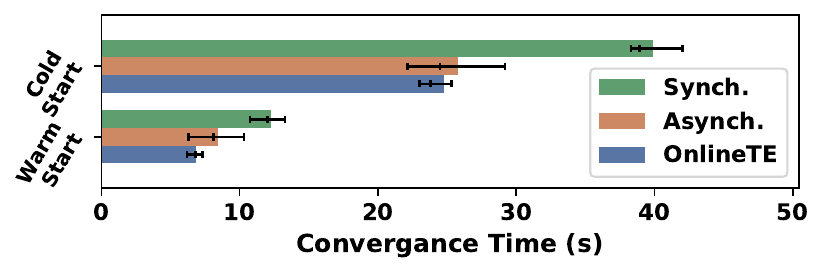}
    \vspace{-2pt}
    \caption{Convergence time for \sysname{} compared to synchronous and asynchronous ADMM \label{fig:conv_dist}}
\end{figure}
}
For this experiment, we emulate a inter-continental WAN topology, Cogentco (\cref{fig:topologies}), since its large geographic footprint illustrates the importance of hierarchy better.
For simplicity, we configure two regions (one in the US and one in Europe, \cref{fig:topologies}), with about 100 ms roundtrip time between them.
We use the path-based MLU solver for this experiment.

\cref{fig:conv_dist} shows the convergence time of the three approaches, both for cold-start and warm-start.
For both, synchronous ADMM has very high latency, almost 40~s for cold-start, since the coordinator needs to wait to hear from all switches.
Asynchronous ADMM converges faster, but can incur long convergence due to a large update from a far away switch.
As a result, it exhibits high variability (over 10s warm-start at p75).
\sysname{}'s hierarchical ADMM dampens stale updates from domains by only passing relatively stable aggregate updates between domains, resulting in faster warm-start (about 7s at p75).
%

\subsection{Switch Computation Overhead}\label{sec:switch-comp-overh}

The \xiupd{} of \cref{fig:nested_admm} runs projected gradient descent~\cite{boyd_co} on switches.
Our implementation parallelizes these across demands.
In our experiments, all of our switch VMs use only 2 CPU cores at $2.0$ GHz  and 2 GB DRAM, less resources than those available on modern switches (\cref{sec:methodology}).
The \xiupd{} does not require significant memory.

\cref{fig:switch_runtime} shows the execution time of a single invocation of \xiupd{} on our VMs as a function of topology size, both for edge-based and path-based approaches.
With a topology with $m$ nodes and $n$ edges, the \xiupd{} for edge-based TE  handles matrices of size $O(mn)$.
A path-based solver can use a more efficient sparse-matrix representation, since the paths are pre-defined and the allocations can be sparse.
As the topology size increases, the computation times increase approximately linearly (both axes are log-scale in \cref{fig:switch_runtime}).
At the size of KDL, an edge-based iteration incurs about 200~ms, and a path-based one about 10-20~ms.
However, the edge-based solver incurs few iterations before converging, so the actual difference in convergence between the two is less.
In practice, we can obtain linear speedup by using more cores (the Arista 7818 has 6) when necessary.



\begin{figure}
    \centering
    \includegraphics[width=0.9\columnwidth]{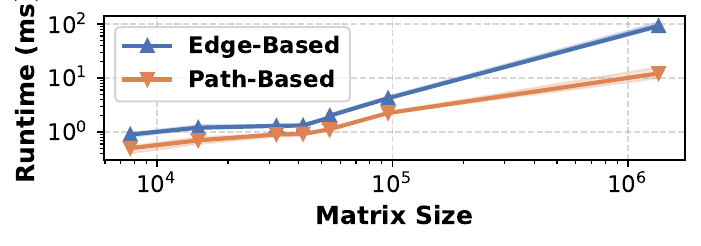}
    \caption{Overhead of a single switch-level iteration for the inner loop of \sysname{}.}
    \label{fig:switch_runtime}
\end{figure}



\section{Related Work}\label{sec:related-work}



\cref{sec:centralized-te}, \cref{sec:distributed-te} and \cref{tab:comparison} have extensively covered the most relevant related work in centralized and distributed TE, and explain how \sysname{} differs from these.
In the rest of this section, we cover other related work.

Beyond DOTE and Teal, there have been several papers on ML-based TE, including HARP~\cite{HARP} and FigRET~\cite{figret}.
Unlike \sysname{}, they are not mathematically guaranteed to be optimal.
RedTE~\cite{redTE} reacts to demand burst at short timescales by using a learned ML model to adjust splits locally until a global TE solver re-computes splits.
It is sub-optimal and complementary to \sysname{} since it targets burst at finer time-scales that the convergence time of \sysname{}.
MegaTE~\cite{megaTE} offloads some of the TE problem onto endpoints which may not be applicable in an ISP setting.


Tangentially related to our work is DATE~\cite{date}, a decentralized joint optimization of routing and congestion control, which might not be practicable at today's scales.
Finally, optimization decomposition has been applied in various domains and a full survey is beyond the scope of this paper; see~\cite{boyd11:_distr} for references.
Closest to \sysname{}, NUM-LAO~\cite{num-lao} views layering as optimization decomposition and \emph{implicitly} implements distributed algorithms by inferring costs from measurable states, e.g., packet drops.
\sysname{} leverages switch compute to \emph{explicitly} implement a distributed optimization problem. 

\section{Conclusions and Future Work}\label{sec:concl-future-work}

\sysname{} is a qualitatively different approach to WAN traffic engineering, in that it is near-optimal, can react to failures and demand changes in seconds, and can support more efficient edge-based TE. On the largest publicly available WAN, its objective regret is sometimes an order of magnitude better than the state of the art. Future work can explore how \sysname{} scales, how its edge-based solver performs on other topologies, and how to further optimize its convergence.

\newpage
\bibliographystyle{ACM-Reference-Format}
\bibliography{references.bib}

\newpage
\appendix
\renewcommand\thefigure{A.\arabic{figure}}
\renewcommand\thesection{\Alph{section}}
\renewcommand{\thealgocf}{A.\arabic{algocf}}
\renewcommand\thetable{A.\arabic{table}}
\setcounter{section}{0}
\setcounter{algocf}{0}
\setcounter{figure}{0}
\setcounter{table}{0}

\section{Formal Definitions of Regret}\label{sec:form-defin-regr}

We measure the \textit{regret} of \sysname{} and the baselines in two ways:
\begin{itemize}
    \item \textbf{Objective Regret} is accumulated sum of objective sub-optimality. Let $O_t$ denote the objective at time $t$ and $O_t^{*}$ be the optimal value; for a problem where the objective is to \textit{minimize} \footnote{For the maximization case, subtract the current objective from the optimal value instead} a value, we have:
    \begin{align*}
        \text{Objective Regret (T)} \coloneq \sum_{t=0}^{t=T-1} \max\big(0, O_t - O_t^{*}\big)
    \end{align*}
    \item \textbf{Capacity Regret} is accumulated sum of link capacity violation. Let $U_t$ be the solution MLU at iteration $t$, then we have:
    \begin{align*}
        \text{Capacity Regret (T)} \coloneq \sum_{t=0}^{t=T-1} \max\big(0, U_t - 1\big)
    \end{align*}
\end{itemize}
This notion of having separate regrets for objective and constraints has been used in convex constrained online optimization literature~\cite{lei-online-regret, mahdavi-online-regret}.
We only focus on capacity constraint violation since these are the only constraints that might be violated by a demand change in a path-based setting.

\section{Edge-Based Formulations}\label{app:edge-based-form}
%
{
\begin{table}[h]
\centering
\renewcommand{\arraystretch}{1.3}
\small
\begin{tabularx}{\columnwidth}{lX}
\toprule
\textbf{Term} & \textbf{Definition} \\ \midrule
\rowcolor{tablegray}
$G(V, E)$ & Directed, simple graph of topology with $|V|=m$ nodes and $|E|=n$ edges. \\
$(s_k, \epsilon_k, d_k)$ & Triplet defining commodity $k$, starting from source node $s_k$, ending in $e_k$ with demand $d_k > 0$. \\
\rowcolor{tablegray}
$C_e$ & Capacity of edge $e$. \\
$\mathcal{S}(k), \mathcal{E}(k)$ & Mappings returning the start and end node of a commodity $k$. \\
\rowcolor{tablegray}
\edgein{v}, \edgeout{v} & In-coming and out-going edge set of node $v \in V$. \\
$f_e(.)$ & Convex cost function of routing over edge $e$. \\
\rowcolor{tablegray}
$K$ & Total number of commodities\footnotemark \\
\bottomrule
\end{tabularx}
\caption{Notation for describing general TE problems.}
\label{tab:te-notation}
\end{table}
\footnotetext{One may assume $K = m(m-1)$, corresponding to each node sending and receiving from all other nodes.}
}
{
\begin{table}[ht]
\centering
\renewcommand{\arraystretch}{1.3}
\small
\begin{tabularx}{\columnwidth}{lX}
\toprule
\textbf{Term} & \textbf{Definition} \\ \midrule
\rowcolor{tablegray}
$X_{ek}$ & Assignment of routed demand from commodity $k$ on edge $e$. We have $X \in \mathbb{R}^{n\times K}$ \\
$B \in \mathbb{R}^{m \times K}$ & Demand matrix as in \cref{def:demand_vec}. \\
\rowcolor{tablegray}
$M \in \mathbb{R}^{m \times n}$ & Topology matrix as in \cref{def:graph_M}. \\
$f_{ek}(.)$ & Commodity-edge cost for routing demand $k$ over edge $e$. Often used for regularization.\\
\rowcolor{tablegray}
$\rho,~\eta$ & Outer and inner ADMM loop step sizes. \\
$\gamma$ & Switch problem step size. \\
\rowcolor{tablegray}
$\epsilon$ & $\ell_1$ or $\ell_2$ regularization coefficient. \\
$Z \in \mathbb{R}^n, ~P \in \mathbb{R}^{n \times K}$ & Outer and inner consensus variables. \\
\rowcolor{tablegray}
$T$ & Null-space dimension of $M$. \\
$\mathcal{N} \in \mathbb{R}^{n \times T}$ & Orthonrmal null-space basis of $M$.\\
\rowcolor{tablegray}
$\atiter{X}{0} \in \mathbb{R}^{n \times K}$ & Initial assignment satisfying \cref{eq:initial_sol}. \\
$U$ & Maximum link utilization across all edges. \\
\rowcolor{tablegray}
$r, u \in \mathbb{R}^n$ & Outer and inner dual variables. \\
$\lambda \in \mathbb{R}^{n \times K}$ & Switch problem dual variable ($\ell_2$ solver specific). \\
\rowcolor{tablegray} 
$S \in \mathbb{R}^{n \times K}$ & Switch problem slack variable ($\ell_1$ solver specific). \\
$\textbf{Prox}_{\alpha f}(v)$ & Proximal operator of $\alpha f(.)$. \\
\bottomrule
\end{tabularx}
\caption{Notation for our edge-based formulation.}
\label{tab:edge-notation}
\end{table}
}

We first describe the high-level optimization for edge-based TE.
%
%
\begin{definition}
\label{def:demand_vec}
For each commodity $k$, the demand vector $B_k \in \mathbb{R}^m$, is defined as:
\begin{align*}
    B_{kv} = \begin{cases}
        +d_k & v = s_k \\
        -d_k & v = \epsilon_k \\
        0 & \text{O.W.}
    \end{cases}
\end{align*}
\end{definition}
We also need a compact representation of the topology. For that, we can use the following:
\begin{definition}
\label{def:graph_M}
The node-edge incidence matrix $M \in \mathbb{R}^{m \times n}$:
\begin{align}
M_{ij} = \begin{cases}
    +1 & \text{if edge $j$ \textit{leaves} node $i$} \\
    -1 & \text{if edge $j$ \textit{enters} node $i$} \\
    0 & \text{O.W.}
\end{cases}
\end{align}
\end{definition}

An \textit{edge-based TE problem} is defined as follows:
\begin{alignat*}{2}
    \text{minimize} & \quad \sum\limits_{e=1}^n f_e\Big(\sum_k X_{ek}\Big) + \sum\limits_{e=1}^n && \sum\limits_{k=1}^K f_{ek}\Big(X_{ek}\Big)\\
    \text{s.t.}
        & \quad \forall k: MX_k = B_k &&\quad \text{Flow Conservation and} \\
        &&&\quad\text{Demand Constraint}\\
        & \quad \sum_k X_{k} \preccurlyeq C &&\quad \text{Edge Capacity Constr.} \\
        & \quad \forall k: 0 \preccurlyeq X_k &&\quad \text{Non-negativity} \\
        & \quad \forall k, e \in \text{\edgein{\mathcal{S}(k)}}: X_{ek} = 0 &&\quad \text{No demand loops} \\
        & \quad \forall k, e \in \text{\edgeout{\mathcal{T}(k)}}: X_{ek} = 0 &&\quad \text{No demand leaks} \\
        & \quad \text{(Other constraints)} &&\quad \text{TBD.}
\end{alignat*}
The first constraint is worth expanding;
it can be expressed more clearly as:
\begin{align}
\label{eq:flow-conservation}
\sum\limits_{e \in \text{\edgeout{v}}} X_{ek} - 
\sum\limits_{e \in \text{\edgein{v}}} X_{ek} = B_{vk} \quad 1 \leq v \leq m
\end{align}
Essentially, it means that on transit nodes (where $B_k$ entries are zero), we have flow conservation.
On source and destination nodes, the difference of the out-going and in-coming flows must equal $+/-$ the demand respectively.
This also illustrates why the loop/leak constraints are needed, as the above allows for cases where a source can produce more than the demand and offset by receiving the same demand later (this applies to destinations consuming demands as well).
We elect to make this constraint explicit, as loops within networks are devastating and we have empirically observed that these constraints aid convergance.

The edge-based assignment has been seldom tackled in context of practical TE; the price payed for optimality is offset by the many variables to be tuned and practical considerations.
Despite this, as we shall see, it lends itself very cleanly to a distributed solver.

We will begin with MLU. It can be written as:
\begin{alignat*}{2}
    \text{minimize} 
        & \quad g(U) + 
        \sum\limits_{e=1}^n f_e\Big(\sum_k X_{ek}\Big) + \sum\limits_{e=1}^n \sum\limits_{k=1}^K f_{ek}\Big(X_{ek}\Big)\\
        \text{s.t.}
            & \quad \forall k: MX_k = B_k \\
            & \quad \sum_k X_{k} \preccurlyeq UC  \\
            & \quad \forall k: 0 \preccurlyeq X_k \\
            & \quad \forall k, e \in \text{\edgein{\mathcal{S}(k)}}: X_{ek} = 0 \\
            & \quad \forall k, e \in \text{\edgeout{\mathcal{T}(k)}}: X_{ek} = 0 \\
            & \quad 0 \leq U
\end{alignat*}
The function $g(.)$ is a linear function in terms of the utilization.
The MLU variable $U$ should ideally be less than 1, but making that constraint explicit would generate a problem that can be completely infeasible at times, thus $U$ has no upper bound in this formulation.

\subsection{Transit Loops in Edge-Based Solutions}
Before we describe how to decompose the problem, we must emphasize that the edge-based formulation requires a bit of care to make sure transit loops (\ie loops over vertices other than the source or destination) do not happen.
A feasible solution can be easily checked for lacking loops by looking at the sub-graph induced by the support of $X_k$ (\ie all edges with positive flow assignments) and verifying that it is acyclic.
\begin{figure}[h]
  \centering
  \includegraphics[width=1.0\columnwidth]{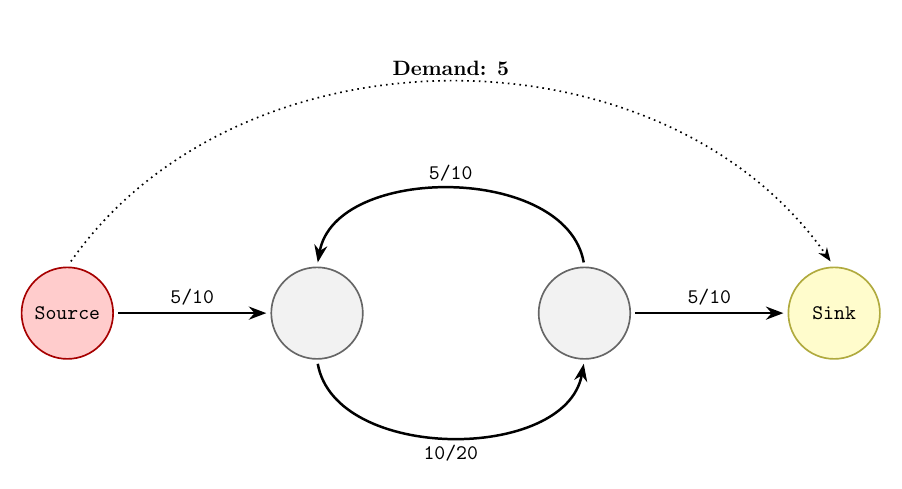}
  \caption{\label{fig:loop_in_mlu} An optimal edge-based MLU solution need not be free of loops if the cost depends only on the MLU. The notation \texttt{x/c} means one routes \texttt{x} flow units over a link with capacity \texttt{c}.}
\end{figure}
However, as-is nothing stops an edge-based solution from creating transit loops.
This is especially the case for MLU, where the cost is determined only by the most congested path, and assignments over non-congested links can easily contain loops.
\cref{fig:loop_in_mlu} shows a very simple example of this.

This is where our edge-commodity cost functions, $f_e$ and $f_{ek}$, come to help us.
\begin{theorem}
\label{th:acyclic_optimal}
If $f_e$ and $f_{ek}$ are convex, strictly increasing functions on $\mathbb{R}^+$, then the optimal edge-based solution $X^*$ cannot contain a loop.
\end{theorem}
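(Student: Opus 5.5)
The argument is by contradiction, using a cycle-cancellation exchange. Suppose $X^*$ is optimal (together with the optimal $U^*$ in the MLU version) and, for some commodity $k$, the support of $X^*_k$ contains a directed cycle $\mathcal{C}$. Let $\delta = \min_{e\in\mathcal{C}} X^*_{ek} > 0$. We build a perturbed solution $\tilde X$ that agrees with $X^*$ everywhere except that $\tilde X_{ek} = X^*_{ek} - \delta$ for $e \in \mathcal{C}$, and we keep $U = U^*$. Then we show $\tilde X$ is feasible and strictly cheaper, contradicting optimality.

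Feasibility is checked constraint by constraint. For flow conservation, the indicator vector $\mathbf{1}_{\mathcal{C}}$ of a directed cycle satisfies $M\mathbf{1}_{\mathcal{C}} = 0$: each node on the cycle has exactly one outgoing and one incoming cycle edge, contributing $+1$ and $-1$ by \cref{def:graph_M}. Hence $M\tilde X_k = MX^*_k - \delta M\mathbf{1}_{\mathcal{C}} = B_k$. Non-negativity holds by the choice of $\delta$. The loop and leak constraints only require certain entries to be zero. Those entries were already zero, and we only decrease positive entries, so they are untouched. The capacity constraint $\sum_k X_k \preccurlyeq UC$ (or $\preccurlyeq C$) is preserved because every aggregate edge load weakly decreases. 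The same reasoning applies to any additional constraints that are monotone in this sense; I would note that ``(Other constraints)'' is assumed to be of this downward-closed type.

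For the strict cost decrease, observe that for each $e\in\mathcal{C}$ both $\sum_{k'} X_{ek'}$ and $X_{ek}$ drop by $\delta>0$, while all other terms are unchanged, and $g(U)$ is unchanged since $U$ is kept. Because $f_e$ and $f_{ek}$ are strictly increasing on $\mathbb{R}^+$, each affected term strictly decreases. The arguments stay in $\mathbb{R}^+$ since every entry remains non-negative. Since $\mathcal{C}$ is nonempty, the total objective strictly decreases, which contradicts optimality of $X^*$. Convexity is not actually needed for this step. It only matters for guaranteeing that a minimizer exists and for the solver, so I would remark on this rather than use it.

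The main obstacle I expect is making precise what ``loop'' means. A cycle might pass through $s_k$ or $\epsilon_k$, but the loop and leak constraints already force zero flow into $s_k$ and out of $\epsilon_k$, so no support cycle can contain them. Transit cycles are therefore the only case, and they are handled above. I would also state explicitly that the support of $X^*_k$ containing any directed cycle (not necessarily a simple one) implies it contains a simple directed cycle, so restricting attention to simple $\mathcal{C}$ is without loss of generality.
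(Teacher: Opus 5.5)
Your proposal is correct and follows essentially the same cycle-cancellation argument as the paper. You subtract the minimum cycle flow, check each constraint for feasibility, and use strict monotonicity of $f_e$ and $f_{ek}$ to get a strict cost decrease; your extra remarks (fixing $U$, using $M\mathbf{1}_{\mathcal{C}} = 0$, noting that a single cycle suffices and that convexity is unused) tighten the paper's sketch without changing its route.
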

\begin{proof}
This can be shown by contradiction. Since $X^*$ is feasible, then all of its entries are non-negative and do not create a loop that crosses the source or destination nodes.
As such, we need only focus on transit loops.
Assume one such loop exists for some commodity $k$, thus it must be possible for one to create a cyclic graph using edges from the support of $X_k^*$.
Let $\mathcal{C}$ denote the set of edges on this cycle.

Now, let $\nu$ be the smallest flow assignment for $k$ over the cycle $\mathcal{C}$.
Since $\mathcal{C}$ is a subset of the support of $X_k^*$, then $\nu > 0$.
The idea is to subtract $\nu$ from all assignments over this cycle.
Thus we can define:
\begin{align*}
    \tilde{X}_{k} = \begin{cases}
        X_{ek}^* - \nu & e \in \mathcal{C} \\
        X_{ek}^* & \text{O.W.} \\
    \end{cases}
\end{align*}
It is easy to see that $\tilde{X}_k$ is still a feasible assignment.
The demand loop and leak constraints remain satisfied, as $\tilde{X}_k$ agrees with $X_k^*$ on the associated edges.

Since all edges in $\mathcal{C}$ must carry at least $\nu$ units of flow, then the entries of $\tilde{X}_k$ are still non-negative.
Since we have also only decreased the total routed flow, the capacity constraint cannot be violated as the total flow on each edge is no larger than before.

Finally, flow conservation remains satisfied. We need only look at the expanded constraint in \cref{eq:flow-conservation} for any vertex that is part of the cycle $\mathcal{C}$, and note that both sums have been decremented by $\nu$ and thus the equality still holds.

The new solution $\tilde{X}_k$ also must contain at least one edge from $\mathcal{C}$ with a zero assignment (which previously routed exactly $\nu$ units), thus $\tilde{X}_k$ breaks the cycle be removing one of the edges.
We can now repeat this for all $k$ and get a full feasible solution $\tilde{X}$, which given the assumptions on $f_e$ and $f_{ek}$ \textit{must} have a smaller objective.
Thus $X^*$ could not have been the optimal solution which is a contradiction.

Therefore, no such loop like $\mathcal{C}$ can exist in the optimal solution for us to take as witness.
\end{proof}
The proof above remains sound if one were to remove the functions $f_e$.
The proof however breaks if one were to remove the functions $f_{ek}$ as well.
For this reason, these functions are not just for regularization, but are necessary in preventing loops in the final edge-based solution.

\subsection{The ADMM Decomposition}

One source of difficulty across both edge-based and path-based formulations of all objectives is the capacity constraint.
This constraint couples every individual assignment across commodities, and as such, if we could push it away into somewhere else, things will become much easier.
ADMM is a reasonable choice for this (note that related work TEAL has also observed this~\cite{teal_2023}).

To do this, introduce $Z_{e} := \sum_k X_{ek}$ as the aggregate flow over edge $e$ across commodities. We can rewrite the problem as:
\begin{alignat*}{2}
    \text{minimize} 
        & \quad g(U) + \sum_e f_e\Big(Z_e\Big) + \sum_e \sum_k f_{ek}\Big(X_{ek}\Big)\\
        \text{s.t.}
            & \quad \sum_k X_k - Z = 0 \\
            & \quad \mathcal{F}_X ~\text{and}~ \mathcal{F}_Z ~\text{hold}.
\end{alignat*}
Where the constraints over $X$ are defined as:
\begin{alignat*}{2}
    \mathcal{F}_X = \begin{cases}
        & \forall k: MX_k = B_k \\
        & \forall k: 0 \preccurlyeq X_k \\
        & \forall k, e \in \text{\edgein{\mathcal{S}(k)}}: X_{ek} = 0 \\
        & \forall k, e \in \text{\edgeout{\mathcal{T}(k)}}: X_{ek} = 0
    \end{cases}
\end{alignat*}
And over $Z$:
\begin{alignat*}{2}
    \mathcal{F}_Z = \begin{cases}
        & 0 \leq U \\    
        & Z \preccurlyeq uC
    \end{cases}
\end{alignat*}

Note that besides the consensus constraint, the other constraints separate into distinct sets, one of which only depends on $Z$ and $u$, and the other on $X$.
In this format, we can write this as an ADMM loop by creating the augmented Lagrangian with the first constraint and distributing the two constraint sets over appropriate updates.
Let $\indexedatiter{X}{k}{m}$ be the value of $X_k$ during ADMM iteration $m$ (the same applies for $Z$ and $r$).

To do this, introduce the dual variable $r \in \mathbb{R}^n$ for that constraint, and we get the following steps:
\begin{alignat*}{2}
    \textbf{X-update:}&\quad \\
    \atiter{X}{m+1} &:= \argmin_{\mathcal{F}_X}\; \sum_e \sum_k f_{ek} (X_{ek}) + \\
    &\quad~~\frac{\rho}{2} \Big\| \sum_k X_k - \atiter{Z}{m} + \atiter{r}{m} \Big\|_2^2 \\
    \textbf{Z-update:}&\quad \\
    \atiter{Z}{m+1} &:= \argmin_{\mathcal{F}_Z}\; g(U) + \sum_e f_e (Z_e) + \\
    &\quad~~\frac{\rho}{2} \Big\| \sum_k \indexedatiter{X}{k}{m+1} - Z + \atiter{r}{m} \Big\|_2^2 \\
    \textbf{Dual Update:}&\quad \\
    \atiter{r}{m+1} &:= \atiter{r}{m} + (\sum_k \indexedatiter{X}{k}{m+1} - \atiter{Z}{m+1})
\end{alignat*}
The above is the scaled ADMM form~\cite{boyd11:_distr} and $\rho$ is a step size that must be carefully chosen.
This constitutes the \textit{Outer Loop} of our algorithm, where assignments are made and the matrix $X$ decided upon, then fine-tuning on aggregate flow is done such that the utilization is lowered.

The update on $Z$ is simple, as it only has variables in number of edges.
The $X$ update however is still a monster due to how many variables it has; we note however that besides the linear constraint, the other constraints are \textit{simple} and can be handled with projection, \textit{and} they also separate over $k$.

We must now scale the problem such that it can be solved quickly.
Assuming $f_{ek}$ isn't an extremely complicated function, the $X$ update is mostly defined by its quadratic objective and the non-negativity constraint.
However, if we shuffle the summation on $f_{ek}$, we see an important structure.
\begin{alignat*}{2}
    \text{minimize}\quad& 
        \sum_k \Big( \sum_e \ f_{ek} (X_{ek}) \Big) + \frac{\rho}{2} \Big\| \sum_k X_k - Z + \atiter{r}{m} \Big\|_2^2 \ \\
    \text{s.t. ...}
\end{alignat*}
From now on, we define $\atiter{F}{m} := \atiter{Z}{m} - \atiter{r}{m}$. Define the following as well:
\begin{alignat*}{2}
    \tilde{f}_{k}(X_{k}) := \sum_e f_{ek} (X_{ek}) \quad \text{and} \quad
    \tilde{g}(X_k) := \frac{\rho}{2} \Big\|  X_k - \atiter{F}{m} \Big\|_2^2
\end{alignat*}
The problem is then:
\begin{alignat*}{2}
    \text{minimize}&\quad 
        \sum_k \tilde{f}_{k}(X_{k}) + \tilde{g}(\sum_kX_k) \\
        \text{s.t. ...}
\end{alignat*}
A creature of this structure is more commonly known as a \textit{Sharing} problem~\cite{boyd11:_distr}.
Typically, such problems are unconstrained, but here the existence of the non-negativity constraint is not too bad, since it too separates over commodities.
We are quite fortunate, since ADMM just happens to lend itself very well to solving a problem like this. If we introduce the pairing $P_k$ for $X_k$ such that $P_k - X_k = 0$, (scaled) ADMM gives the following update steps:
\begin{alignat*}{2}
X_k^{(m+1)} &\coloneq \argmin_{\mathcal{F}_X} \sum_e f_{ek} (X_{ek}) + \\
&\quad~~~\frac{\eta}{2}\Big\| X_k - X_k^{(m)} + \bar{X}^{(m)} - \bar{P}^{(m)} + u^{(m)} \Big\|_2^2 \\
\bar{P}^{(m+1)} &\coloneq \argmin_{\bar{P}} \frac{\rho}{2} \Big\| K \bar{P} - \atiter{F}{m} \Big\|_2^2 + \\
&\quad~~~\frac{K^2\eta}{2}\Big|\Big| \bar{P} - u^{(m)} - \bar{X}^{(m+1)} \Big|\Big|_2^2\\
u^{(m+1)} &\coloneq u^{(m)} + \bar{X}^{(m+1)} - \bar{P}^{(m+1)}
\end{alignat*}
Where $\eta$ is the step size of this second ADMM loop, $\bar{P}$ and $\bar{X}$ are column-wise averages of their respective variables and $u$ is a dual variable.

The mean update has a closed form solution:
\begin{alignat*}{2}
\bar{P}^{(m+1)} \coloneq \frac{\frac{\atiter{F}{m}}{K} + \frac{\eta}{\rho}(u^{(m)} + \bar{Y}^{(m+1)})}{1 + \frac{\eta}{\rho}}
\end{alignat*}
Thus, we end up with:
\begin{alignat*}{2}
X_k^{(m+1)} &\coloneq \argmin_{\mathcal{F}_X} \sum_e f_{ek} (X_{ek}) + \\
&\quad~~~\frac{\eta}{2}\Big\| X_k - X_k^{(m)} + \bar{X}^{(m)} - \bar{P}^{(m)} + u^{(m)} \Big\|_2^2\\
\bar{P}^{(m+1)} &\coloneq \frac{\frac{\atiter{F}{m}}{K} + \frac{\eta}{\rho}(u^{(m)} + \bar{Y}^{(m+1)})}{1 + \frac{\eta}{\rho}} \\
u^{(m+1)} &\coloneq u^{(m)} + \bar{X}^{(m+1)} - \bar{P}^{(m+1)}
\end{alignat*}
Take note that we only need to keep $\atiter{\bar{P}}{m}$, not the individual $P_k$.
The above constitutes as our \textit{Inner ADMM Loop}, together with our Outer Loop, they form an intertwined procedure that nudges our assignment matrix closer to examples with less and less maximum utilization.
The outer loop provides an algorithmic benefit by decoupling aggregate flow from individual assignments, and the inner loop provides scalability by decoupling across commodities.

\subsection{A Useful Trick}
The previous decomposition (the outer and inner loop) are core to our suite of solvers, regardless of objective.
There is another trick that we also employ many times for edge-based solvers, and that concerns the linear constraint within $\mathcal{F}_X$.
Recall that the update to $X_k$ finally takes the form of:
\begin{alignat*}{2}
\text{minimize}
&\quad
\sum_e f_{ek} (X_{ek}) + \\
&\quad~~~\frac{\eta}{2}\Big\| X_k - X_k^{(m)} + \bar{X}^{(m)} - \bar{P}^{(m)} + u^{(m)} \Big\|_2^2\\
\text{s.t.} &\quad MX_k = B_k \\
&\quad 0 \preccurlyeq X_k \\
&\quad \forall e \in \text{\edgein{\mathcal{S}(k)}}: X_{ek} = 0 \\
&\quad \forall e \in \text{\edgeout{\mathcal{T}(k)}}: X_{ek} = 0
\end{alignat*}
We can make the linear constraint implicit!
In particular, assume that some feasible solution to the linear constraint like $\indexedatiter{X}{k}{0}$ is known:
\begin{alignat}{2}
    \forall k: M\indexedatiter{X}{k}{0} = B_k
    \label{eq:initial_sol}
\end{alignat}

Let $\mathcal{N} \in \mathbb{R}^{n\times T}$ be a matrix whose columns form a basis for the null space of $M$ and $T$ is the null space dimension of $M$.
Such a matrix can be easily found with SVD (with the added bonus that we can have $\mathcal{N}^T\mathcal{N} = I$, which we also assume).
Then every solution to the linear constraint would be of the form:
\begin{align*}
    X_{ek} = \indexedatiter{X}{ek}{0} + \sum_t \mathcal{N}_{et} Y_{tk}
\end{align*}
We can substitute this into the $X$ update and get the following:
\begin{alignat*}{2}
    \text{minimize}
    &\quad
    \sum_e f_{ek} (X_{ek}^{(0)} + \mathcal{N}Y_k) + \\
    &\quad~~~\frac{\eta}{2}\Big\| \indexedatiter{X}{k}{0} + \mathcal{N}Y_k - X_k^{(m)} + \bar{X}^{(m)} - \bar{P}^{(m)} + u^{(m)} \Big\|_2^2
    \\
    &\quad 0 \preccurlyeq \indexedatiter{X}{k}{0} + \mathcal{N}Y_k \\
    &\quad \text{etc.}
\end{alignat*}
This problem is small and manageable, any algorithm like Projected Gradient Descent (PGD) or Coordinate Descent can be used to solve it.
\sysname{} by default uses PGD.

As a final note, the choice of the initial feasible solution $\indexedatiter{X}{k}{0}$ can be helpful.
We consider mostly two choices:
\begin{itemize}
    \item \textbf{Uniform routing on shortest paths}. This is the simplest, and generates sparse assignments that we can keep in small memory and do quick arithmetic on (in particular, the products $\mathcal{N}^T \indexedatiter{X}{k}{0}$ do appear in many formulations, which itself is actually quite dense no matter the sparsity pattern of $\indexedatiter{X}{k}{0}$).
    \item \textbf{Least squares solution}. In particular, let $\indexedatiter{X}{k}{0} \coloneq M^{\dagger}B_k$, where $M^\dagger$ is the Moore-Penrose pseudo-inverse of $M$. The result will be a dense matrix often, but it brings with itself many computational boons, the main one being that $\mathcal{N}^T\indexedatiter{X}{k}{0} = 0$ holds.
\end{itemize}
In what follows, we assume that the least squares solution is always used.

\subsection{$\ell_2$ Regularized Form}
We can let the functions $f_{ek}$ be small quadratic functions.
As long as their values remain comparatively small and obey the assumptions in \cref{th:acyclic_optimal}, their exact structure won't matter.

One simple case is to consider functions $f_{ek}(x) = \frac{\epsilon}{2}(x)^2$ for a small $\epsilon \geq 0$.
By expanding these $f_{ek}$ we can see that the update to $\indexedatiter{Y}{k}{m+1}$ turns to\footnote{If we didn't use the least-squares feasible assignment, then the expansion also generates a linear term in $Y_k$. This does not change the structure of the final problem and the same methods can be applied to it.}:
\begin{alignat*}{2}
    \text{minimize} 
        & \quad \frac{\epsilon}{2} \Big|\Big| Y_k \Big|\Big|_2^2 + \frac{\eta}{2}\Big|\Big| Y_k - Y_k^{(m)} + \bar{Y}^{(m)} - \bar{P}^{(m)} + u^{(m)} \Big|\Big|_2^2 \\
    \text{s.t.}
        & \quad \forall k: 0 \preccurlyeq \indexedatiter{X}{k}{0} + \mathcal{N}Y_k \\
        & \quad \forall k, e \in \mathcal{S}(k) \cup \mathcal{T}(k)~: 0 = \indexedatiter{X}{k}{0} + \mathcal{N}Y_k
\end{alignat*}
Or in a slightly more standard form:
\begin{alignat*}{2}
    \text{minimize} 
        & \quad \frac{1}{2} \Big|\Big| Y_k \Big|\Big|_2^2 - \frac{\eta}{\eta + \epsilon} Y_k^T (Y_k^{(m)} - \bar{Y}^{(m)} + \bar{P}^{(m)} - u^{(m)}) \\
    \text{s.t.}
        & \quad \forall k: 0 \preccurlyeq \indexedatiter{X}{k}{0} + \mathcal{N}Y_k \\
        & \quad \forall k, e \in \mathcal{S}(k) \cup \mathcal{T}(k)~: 0 = \indexedatiter{X}{k}{0} + \mathcal{N}Y_k
\end{alignat*}

Among the easier ways to solve this problem, PGD seems suitable, but the projection that has to be done to make the solution feasible can get quite expensive.

We can make the projections trivial by solving the dual instead.
If we introduce the dual variable $\lambda_k \in \mathbb{R}^n$ and exchange the objective sign to get another minimization problem, we end up with:
\begin{alignat*}{2}
    \text{minimize} 
        & \quad \frac{1}{2} \Big|\Big| \mathcal{N}^T\lambda_k + \indexedatiter{C}{k}{m} \Big|\Big|_2^2 + \lambda_k^T\indexedatiter{X}{k}{0} \\
    \text{s.t.}
        & \quad \forall k, e \notin \mathcal{S}(k) \cup \mathcal{T}(k)~: 0 \leq \lambda_{ek}
\end{alignat*}
Where for compactness, we let $\indexedatiter{C}{k}{m} := \frac{\eta}{\eta + \epsilon} (\indexedatiter{Y}{k}{m} - \atiter{\bar{Y}}{m} + \atiter{\bar{P}}{m} - \atiter{u}{m})$.
The KKT conditions give the following relation to retrieve the primal solution from the dual:
\begin{alignat*}{2}
    Y_k^* = \mathcal{N}^T\lambda_k^* + \indexedatiter{C}{k}{m}
\end{alignat*}
PGD is very well suited to solve this problem, especially because the projections needed are trivial.
When sparsity isn't needed and we want to retrieve the \textit{true} edge-based optimal, this is the method that we employ.

\subsection{$\ell_1$ Regularized Form}
The constraints $MX_k = B_k$ and the non-negativity constraints, gives a very under-determined system, and we are interested in the most sparse solutions.
This can be achieved by penalizing the problem with an $\ell_1$ norm cost.
The lowest hanging fruit here would be to let:
$$
f_{ek}(x) = \epsilon \; | x |
$$

It is easy to verify that the sharing problem has not changed, the update to $\bar{P}$ and the dual update remain the same as with the $L_2$ norm, but the updates to $Y$ change to the following:
\begin{alignat*}{2}
    \text{minimize}\quad& 
        \epsilon \; \Big\| X_k \Big\|_1 + \frac{\eta}{2}\Big\| Y_k - Y_k^{(m)} + \bar{Y}^{(m)} - \bar{P}^{(m)} + u^{(m)} \Big\|_2^2 \\
        \text{s.t.}\quad ...
\end{alignat*}
Which is a constrained LASSO. 
It is not the easiest thing to solve in general, but our saving grace is that the switch problems are usually quite small.
There are multiple ways to solve these (coordinate decent is often used, and we also use it as well), but ADMM can also be used again\footnote{Which solidifies the hunch that the writer is a one-trick pony.}:
\begin{alignat*}{2}
\indexedatiter{X}{k}{m, i+1} &:= \argmin_{0 \preccurlyeq X_k, etc.} 
\frac{\epsilon}{\eta}\Big\| X_k \Big\|_1 + \\
&\quad~~~\frac{\gamma}{2} \Big\| X_k - \indexedatiter{X}{k}{0} - \mathcal{N}\indexedatiter{Y}{k}{m, i} + \indexedatiter{t}{k}{i} \Big\|_2^2 \\
Y_k^{(m, i+1)} &:= \argmin \frac{1}{2} 
\Big\| Y_k - \indexedatiter{C}{k}{m} \Big\|_2^2 + 
\\
&\quad~~~\frac{\gamma}{2} \Big\| \indexedatiter{X}{k}{m, i+1} - \indexedatiter{X}{k}{0} - \mathcal{N}\indexedatiter{Y}{k}{m, i} + \indexedatiter{t}{k}{i} \Big\|_2^2 \\
\indexedatiter{t}{k}{m,i+1} &:= \indexedatiter{t}{k}{m,i} + (\indexedatiter{X}{k}{m,i+1} - \indexedatiter{X}{k}{0} - \mathcal{N}\indexedatiter{Y}{k}{m, i+1})
\end{alignat*}
The second step has the same solution as we discussed in the $L_2$ norm case. 
The first problem however has a non-smooth objective. If we write it in an unconstrained form using the identifier function for the positive orthant, we get:
\begin{alignat*}{2}
\indexedatiter{X}{k}{m, i+1} &:= \argmin \frac{\epsilon}{\eta\gamma}\Big\| X_k \Big\|_1 + \mathcal{I}_+(X_k) + \\
&\quad~~~\frac{1}{2} \Big\| X_k - \indexedatiter{X}{k}{0} - \mathcal{N}\indexedatiter{Y}{k}{m, i} + \indexedatiter{t}{k}{i} \Big\|_2^2
\end{alignat*}
This is exactly the Proximal Operator of the function $\frac{\epsilon}{\eta\gamma}\| x \|_1 + \mathcal{I}_+(x)$. Both of these functions are applied element-wise and are summative, thus the proximal composition rule applies to them. This gives the following:
\begin{alignat*}{2}
\indexedatiter{X}{k}{m, i+1} &:= \text{Prox}_{+} (\text{Prox}_{\epsilon/\eta\gamma ||.||_1} (\indexedatiter{X}{k}{0} + \mathcal{N}\indexedatiter{Y}{k}{m, i} + \indexedatiter{t}{k}{i}) )
\end{alignat*}
The proximal operator of $\ell_1$ norm is the Soft Thresholding operator $\mathcal{S}_{\epsilon/\eta\gamma}$~\cite{neal-boyd-proximal}, and the proximal operator of the non-negative orthant is just clipping the negative values to zero.
Thus, the final solution involves the following:
\begin{alignat*}{2}
\indexedatiter{X}{k}{m, i+1} &:= \Big[ \mathcal{S}_{\epsilon/\eta\gamma} (\indexedatiter{X}{k}{0} + \mathcal{N}\indexedatiter{Y}{k}{m, i} + \indexedatiter{t}{k}{m,i}) \Big]_+ \\
Y_k^{(m, i+1)} &:= \frac{\indexedatiter{C}{k}{m} - \gamma \; \mathcal{N}^T (\indexedatiter{X}{k}{0} + \indexedatiter{t}{k}{i} - \indexedatiter{X}{k}{m, i+1})}{1 + \gamma} \\
\indexedatiter{t}{k}{m,i+1} &:= \indexedatiter{t}{k}{m,i} + (\indexedatiter{X}{k}{m,i+1} - \indexedatiter{X}{k}{0} - \mathcal{N}\indexedatiter{Y}{k}{m, i+1})
\end{alignat*}
And $\indexedatiter{X}{k}{m, i+1}$ are sparse solutions that we can report directly to the controller.
When it converges, this method provides nice solutions but can be slow and iterates can potentially violate flow conservation.

We want to emphasize that the above is slightly abusing notation.
In practice, we don't just clip the entries of $\indexedatiter{X}{k}{m, i+1}$, but we also pin the ones on the loop/leak constraints to zero as well.
Finally, note that while we let $f_{ek}$ only contain a weakly convex function, we can still add an $\ell_2$ norm to it (like the case we had in the previous section) and solve it with the exact same method as above.
This is because adding a simple $\ell_2$ norm would only translate to shift of the bias in the quadratic objective, which still leaves it a quadratic.

\section{Path-Based Formulations}\label{app:path-based-form}
{
\begin{table}[h]
\centering
\renewcommand{\arraystretch}{1.3}
\small
\begin{tabularx}{\columnwidth}{lX}
\toprule
\textbf{Term} & \textbf{Definition} \\ \midrule
\rowcolor{tablegray}
$\mathbb{P}$ & Set of all paths usable for routing. \\
$T$ & Maximum number of paths that a single commodity can be split over. When the commodity is known, we may index paths with $1 \leq t \leq T$. \\
\rowcolor{tablegray}
$p_k$ & Actual number of paths available for a commodity $k$ (this may not be $T$, due to failures or topology restrictions). \\
$Y_{tk}$ & The split of commodity $k$ over path index $t$. \\
\rowcolor{tablegray}
$\mathcal{I}$ & Path dictionary as in \cref{def:path-dict} \\
$\mathcal{P}(e, k)$ & Set of all path indices crossing edge $e$ that can serve commodity $k$ as in \cref{def:path-edge-map} \\
\rowcolor{tablegray}
$\textbf{PS}(S)$ & The power-set of set $S$. \\
$\alpha \in \{0, 1\}^{K \times n \times T}$ & The path mask matrix as in \cref{def:path-mask}. \\
\bottomrule
\end{tabularx}
\caption{Notation for our path-based formulation.}
\label{tab:path-notation}
\end{table}
}


%
We first define how commodities can make use of available paths:
\begin{definition}
\label{def:path-dict}
The path dictionary $\mathcal{I}: \{1,..,K\} \times \{1,..,T\} \to \mathbb{P}$ is the mapping that given a commodity and path index, outputs the associated path in $\mathbb{P}$.
In the case where less than $T$ paths are available or a path has become unavailable, it returns a placeholder $\emptyset$ value.
\end{definition}
To handle capacity constraints, we also need to define which path crosses which edge:
\begin{definition}
\label{def:path-edge-map}
The mapping $\mathcal{P}: E\times\{1,..,K\}\to\textbf{PS}(\{1,..,T\})$ gives all path indices crossing an edge $e$ and (potentially) serving a commodity $k$ (note that there is at most $T$ such paths, thus the indexing $\mathcal{I}$ can be used to convert to the actual path).
\end{definition}

Instead of a pure path-based formulation, we opt to use a formulation that has an edge-based \textit{flare} to it.
In particular, note that given a path-based assignment, it is trivial to recover an edge-based assignment.
To formalize this, we define the path mask matrix:
\begin{definition}
\label{def:path-mask}
For each commodity $k$ define $\alpha^{(k)} \in \{0, 1\}^{n \times T}$ such that for each column $\indexedatiter{\alpha}{t}{k}$ we have:
\begin{alignat*}{2}
\forall e \in E: \indexedatiter{\alpha}{et}{k} =
\begin{cases}
1& t \in \mathcal{P}(k, e) \\
0& \text{O.W.}
\end{cases}
\end{alignat*}
\end{definition}
It is easy to see that an edge-based assignment $X$ can be created from a path-based assignment $Y$ with:
\begin{alignat*}{2}
    \forall k: X_k = \alpha^{(k)} Y_k d_k
\end{alignat*}
By adding this constraint into any edge-based problem, a path-based equivalent is created.
There is one small change compared to the previous edge-based solver, where we may elect to introduce functions $f_{tk}$ indexed per-path, per-commodity instead of per-edge, per-commodity.
If we desire strong convexity, these functions are enough for that purpose, since a path-based assignment uniquely defines an edge-based assignment.
In such a setting, being loop-free follows from the paths in $\mathcal{P}$ being loop-free and have no relation to what choice of $f_{tk}$ or $f_e$ we make.

It may appear vexing that we would combine edge-based and path-based problems like this, but it has two main advantages:
\begin{itemize}
    \item It simplifies the capacity constraint greatly, allowing us to recover our outer and inner ADMM loops with hardly any difference compared to the edge-based case.
    From an implementation point of view, we may use any existing code that implements our edge-based backend for our path-based solvers.
    \item It pushes any consideration about paths and their availability, into the switches alone, sparing the controller from having to contend with it.
\end{itemize}
As such, a path-based solver is a very intuitive restriction of our edge-based solver.
Essentially, it is the exact same solver, but for each iteration, the switches project their edge-based solution onto the available paths.
For this reason, the path-based algorithm has the exact same message complexity as the edge-based solver.
The main benefit comes from the computation complexity, as the $Y_{tk}$ matrix is much smaller than the $X_{ek}$ matrix that we had to use previously and we also have no possibility of a loop (unless a configured path has a loop, which we assume never happens).
\subsection{Path-Based PGD Solution}
For the path-based solver, only the switch problem changes.
It is very easy to show that the problem would become:
\begin{alignat*}{2}
    \text{minimize}& \quad \sum_t f_{tk}\Big(d_kY_{tk}\Big) + \\
    &\quad\frac{\eta}{2} \Big\| \alpha^{(k)}Y_k d_k - X_k^{(m)} + \bar{X}^{(m)} - \bar{P}^{(m)} + u^{(m)} \Big\|_2^2 \\
    \text{s.t.}
        & \quad \sum_t Y_{tk} = 1 \\
        & \quad \forall  t>p_k: Y_{tk} = 0 \\
        & \quad 0 \preccurlyeq Y_{k}
\end{alignat*}
Let $\atiter{A}{k} := \Big( (\atiter{\alpha}{k})^T \atiter{\alpha}{k} \Big)$ be the $T \times T$ Gram matrix of the path mask.
How easy we can solve this problem, depends completely on what $f_{tk}$ is and what paths we use.
If $f_{tk}$ is smooth, then the gradient is:
\begin{alignat*}{2}
    \nabla_k =& d_k f'\Big(d_k Y_{tk}\Big) + \eta d_k^2
    \atiter{A}{k} Y_k - \\
    &\eta d_k (\atiter{\alpha}{k})^T \Big( \indexedatiter{X}{k}{m} - \atiter{\bar{X}}{m} + \atiter{\bar{P}}{m} - \atiter{u}{m} \Big)
\end{alignat*}
And we can proceed as before.
The projection step is projection onto a \textit{reduced} probability simplex, where some of the entries are pinned to zero beforehand.
This is mathematically equivalent to just projecting onto the Simplex in terms of the algorithm, but there are some small practical considerations that we won't discuss here.

However, we will take time to emphasize a certain case where the solution becomes much simpler.
That is the case of \textit{edge-disjointed} paths, meaning that for each commodity, none of the at most $T$ paths share any edges.

Equivalently, this means that the Gram matrix $\atiter{A}{k}$ is actually \textit{diagonal}.
With this, all previous formulations simplify greatly, however, it is understandable that we may be hesitant to limit ourselves to such a path selection.
An almost equally efficient way of handling the general case is possible by taking advantage of two things about $\atiter{\alpha}{k}$, its \textit{sparsity} and being Boolean valued.

Essentially, in order to implement the multiplications with $\atiter{\alpha}{k}$, we need only know the positions where we know $\alpha$ is not zero, and then we can just accumulate the sums over these indices.
In the particular case of Python, this can be done very efficiently using JIT compilation (our implementation currently uses \texttt{Numba}).

We finally close by making a critical remark about the PGD step size, $\gamma_k$. Unlike the edge-based solver, PGD step size must be adjusted per commodity.
This is because the matrix $\atiter{A}{k}$ has very different dominant eigenvalues for each commodity (the largest eigenvalue is loosely dependent on the length of the longest path).
This consideration wasn't needed in the edge-based case, since the matrices we work with in that case have already been normalized.

 If $\Lambda_k$ is the largest eigenvalue of $\atiter{A}{k}$, one good choice is to let $\gamma_k \coloneqq \frac{1}{\Lambda_k}$. To calculate $\Lambda_k$ for large topologies, we use a few power method iterations (10 or 20 are enough for KDL).

\section{Adding Hierarchy}\label{app:adding-hierarchy}
Consider the following setting:
\begin{itemize}
    \item The network is partitioned into $\mathcal{D}$ domains. A variable is indexed for domain $d$ using brackets. For example, $X[d]$ denotes the matrix $X$ for domain $d$.
    \item The notation $X[!d]$ denotes sum over all partitions except $d$ (i.e.  $X[!d] \triangleq \sum_{d'} X[d'] - X[d]$.)
    \item Each domain will have one controller, which can freely and efficiently communicate with its set of switches.
    \item Each domain can \textit{only} manipulate its set of assignment variables, $X[d]$, and may only rely on estimates or aggregate values received from other domains.
\end{itemize}

We already have full decomposition over the demand/flow-conservation/non-negativity constraints.
The only thing that we must handle, is the capacity constraint and utilization objective.

One way to handle this is as follows. 
Define:
\begin{itemize}
    \item The domain aggregate flow as $Z[d] := \sum_k X[d]_k$.
    \item The pairing $R[d] = Z[d]$ (this allows us to pass information about other domain aggregates within loops).
    \item The domain MLU, $u[d]$ and its consensus variable $U$.
\end{itemize}
The original MLU problem can be rewritten as:
\begin{alignat*}{2}
    \text{minimize} 
        & \quad \frac{1}{\mathcal{D}} \sum_dg(U[d]) + 
        \sum_e f_e (\sum_d R[d]) + \\
        & \quad \sum_d \sum_e \sum_k f[d]_{ek} (X[d]_{ek}) \\
    \text{s.t.}
        & \quad R[!d] + Z[d] \preccurlyeq u[d]C && \forall d \\
        & \quad MX[d]_k = B[d]_k && \forall d, k \\
        & \quad 0 \preccurlyeq X[d]_k && \forall d, k \\
        & \quad 0 \leq U[d] && \forall d \\
        & \quad \sum_k X[d]_k - Z[d] = 0 && \forall d \\
        & \quad Z[d] - R[d] = 0 &&\forall d \\
        & \quad U[d] - O = 0 && \forall d
\end{alignat*}
The most important part of this rewrite is in the capacity constraint.
Exchanging $Z[d]$ with $R[d]$ for any domain except the \textit{current} one means that we can pass estimates of other domain aggregates without having to optimize them.
The price that we pay for this however, is that the capacity constraint now exists over every domain \textit{and} across anything that solves for $R[d]$ or $Z[d]$.

We can attack the last two constraints with ADMM.
Introduce the dual variables $l[d]$ and $v[d]$ in turn for these two constraints.

We first optimize over $X[d]$, $Z[d]$ and $U[d]$ by doing:
\begin{alignat*}{2}
    \text{minimize} 
        & \quad \frac{1}{\mathcal{D}} \sum_dg(U[d]) + 
        \sum_d \sum_e \sum_k f[d]_{ek} &&\Big(X[d]_{ek}\Big) \\
        & \quad \sum_d \frac{\beta}{2} \Big(U[d] - \atiter{O}{m} + \atiter{v[d]}{m}\Big)^2 + \\
        & \quad \sum_d \frac{\beta}{2} \Big\| Z[d] - \atiter{R[d]}{m} + \atiter{l[d]}{m} \Big\|_2^2 \\
    \text{s.t.}
        & \quad \atiter{R[!d]}{m} + Z[d] \preccurlyeq U[d]C && \forall d \\
        & \quad MX[d]_k = B[d]_k && \forall d, k \\
        & \quad \sum_k X[d]_k - Z[d] = 0 && \forall d \\
        & \quad 0 \leq U[d] && \forall d \\
        & \quad 0 \preccurlyeq X[d]_k && \forall d, k
\end{alignat*}
Crucially, this problem completely separates over domains (everything is indexed with $d$, including objective and constraints).
This gives us the \textit{Domain Optimization} problem (we can do away with domain indices to see the problem better).
\begin{alignat*}{2}
    \text{minimize} 
        & \quad \frac{1}{\mathcal{D}} g(U) + 
        \sum_e \sum_k f_{ek} (X_{ek}) \\
        & \quad + \frac{\beta}{2} \Big(U - \atiter{O}{m} + \atiter{v}{m}\Big)^2 \\
        & \quad + \frac{\beta}{2} \Big\| Z - \atiter{R}{m} + \atiter{l}{m} \Big\|_2^2 \\
    \text{s.t.}
        & \quad \atiter{R[!]}{m} + Z \preccurlyeq UC \\
        & \quad MX_k = B_k && \forall k \\
        & \quad 0 \preccurlyeq X_k && \forall k \\
        & \quad 0 \leq U \\
        & \quad \sum_k X_k - Z = 0
\end{alignat*}
This is pretty much the synchronous problem that we aimed to solve previously,
except here, we can get away with optimizing only copies of our own assignment, leaving only some headroom for other domains.
This problem we know very well how to solve efficiently.

The new part now, is the \textit{Domain Consensus} problem over $R[d]$ and $O$, which takes the form:
\begin{alignat*}{2}
    \text{minimize} 
        & \quad \sum_e f_e \Big(\sum_d R[d]\Big) + \\
        & \quad \sum_d \frac{\beta}{2} \Big(\atiter{U[d]}{m+1} - O + \atiter{v[d]}{m}\Big)^2 + \\
        & \quad \sum_d \frac{\beta}{2} \Big\| \atiter{Z[d]}{m+1} - R[d] + \atiter{l[d]}{m} \Big\|_2^2 \\
    \text{s.t.}
        & \quad R[!d] + \atiter{Z[d]}{m+1} \preccurlyeq \atiter{U[d]}{m+1}C && \forall d 
\end{alignat*}
The problem is unconstrained over $O$, admitting a simple closed form solution of:
\begin{alignat*}{2}
    \atiter{O}{m+1} \gets \atiter{\overline{U[d]}}{m+1} + \atiter{\overline{v[d]}}{m}
\end{alignat*}
And simplifying the remaining problem to:
\begin{alignat*}{2}
    \text{minimize} 
        & \quad \sum_e f_e \Big(\sum_d R[d]\Big) +\\
        & \quad \sum_d \frac{\beta}{2} \Big\| \atiter{Z[d]}{m+1} - R[d] + \atiter{l[d]}{m} \Big\|_2^2 \\
    \text{s.t.}
        & \quad R[!d] \preccurlyeq \atiter{U[d]}{m+1}C - \atiter{Z[d]}{m+1} \quad\forall d 
\end{alignat*}
This problem isn't too big, thus we can afford to add one new constraint to make it a bit more pleasant to look at.
Define $F := \sum_d R[d]$ to be the total reserved flow, then:
\begin{alignat*}{2}
    \text{minimize} 
        & \quad \sum_e f_e (F) +\sum_d \frac{\beta}{2} \Big\| \atiter{Z[d]}{m+1} - R[d] + \atiter{l[d]}{m} \Big\|_2^2 \\
    \text{s.t.}
        & \quad F - R[d] \preccurlyeq \atiter{U[d]}{m+1}C - \atiter{Z[d]}{m+1} \quad\forall d \\
        & \quad F - \sum_d R[d] = 0
\end{alignat*}
This is then finally finished with a dual update:
\begin{alignat*}{2}
    \atiter{v[d]}{m+1} &= \atiter{v[d]}{m} + \Big( \atiter{u[d]}{m+1} - \atiter{U}{m+1} \Big) \\
    \atiter{l[d]}{m+1} &= \atiter{l[d]}{m} + \Big( \atiter{Z[d]}{m+1} - \atiter{R[d]}{m+1} \Big)
\end{alignat*}
We now take a moment to recognize that this solution meets our requirements:
\begin{itemize}
    \item Assignments decompose over domains, and only demands specific to that domain are required.
    \item Each domain can optimize for utilization based on its own demands. Since this part never leaves that domain, we can reconvene in the domain controller as much as we like.
    \item The more the domains optimize, the better! The Domain Consensus problem benefits from it.
\end{itemize}
To some degree, this method is close in spirit to \cite{ncflow}, with the crucial difference that we do not heuristically break the capacity constraint into $\mathcal{D}$ inequalities for each domain and instead cut the link capacities with the same factor.

\begin{figure*}[h]
    \centering
    \begin{minipage}{0.28\textwidth}
        \centering
        \subfigure{\includegraphics[width=0.9\linewidth]{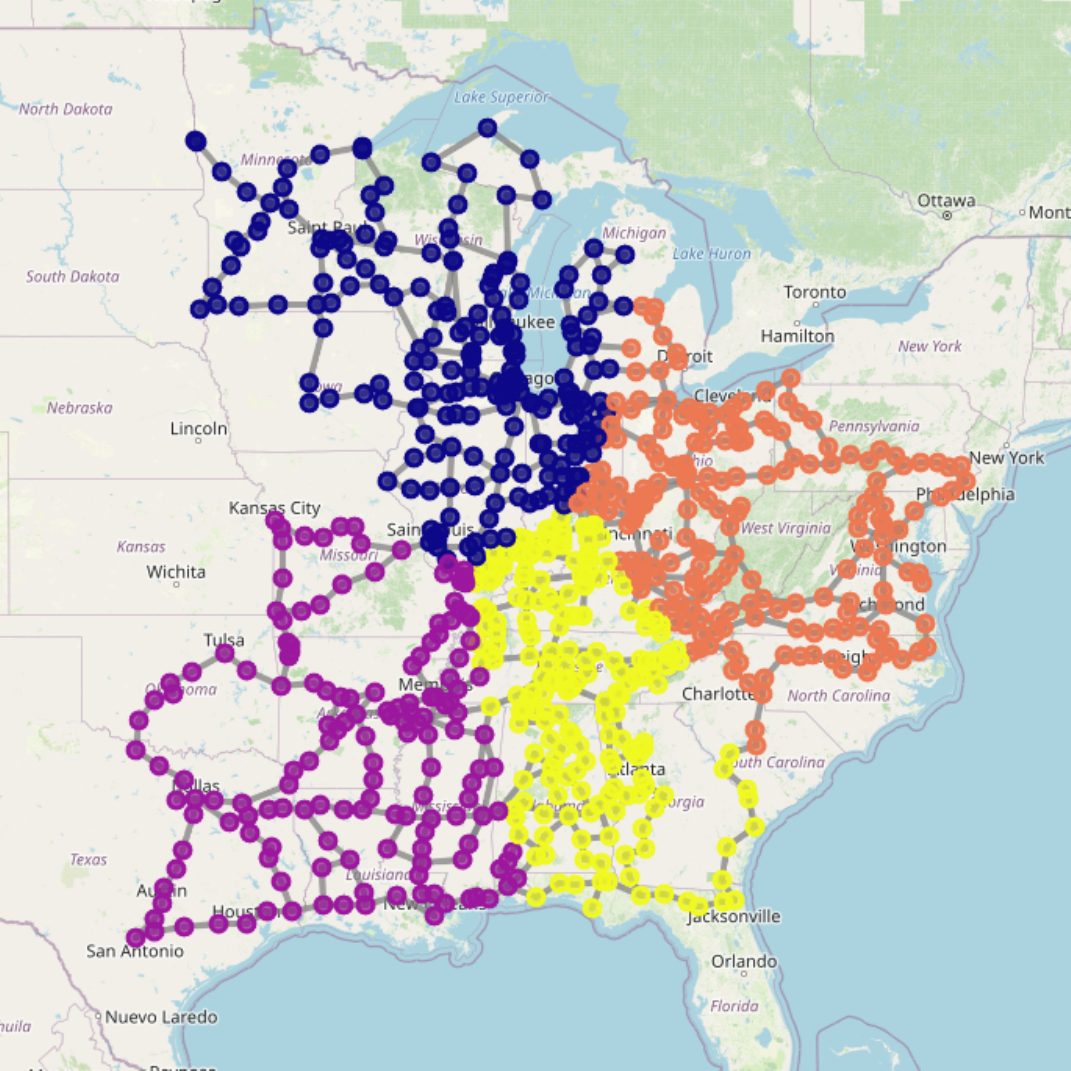}\label{fig:kdl}}
    \end{minipage}%
    \begin{minipage}{0.68\textwidth}
        \centering
        \subfigure{\includegraphics[width=0.9\linewidth]{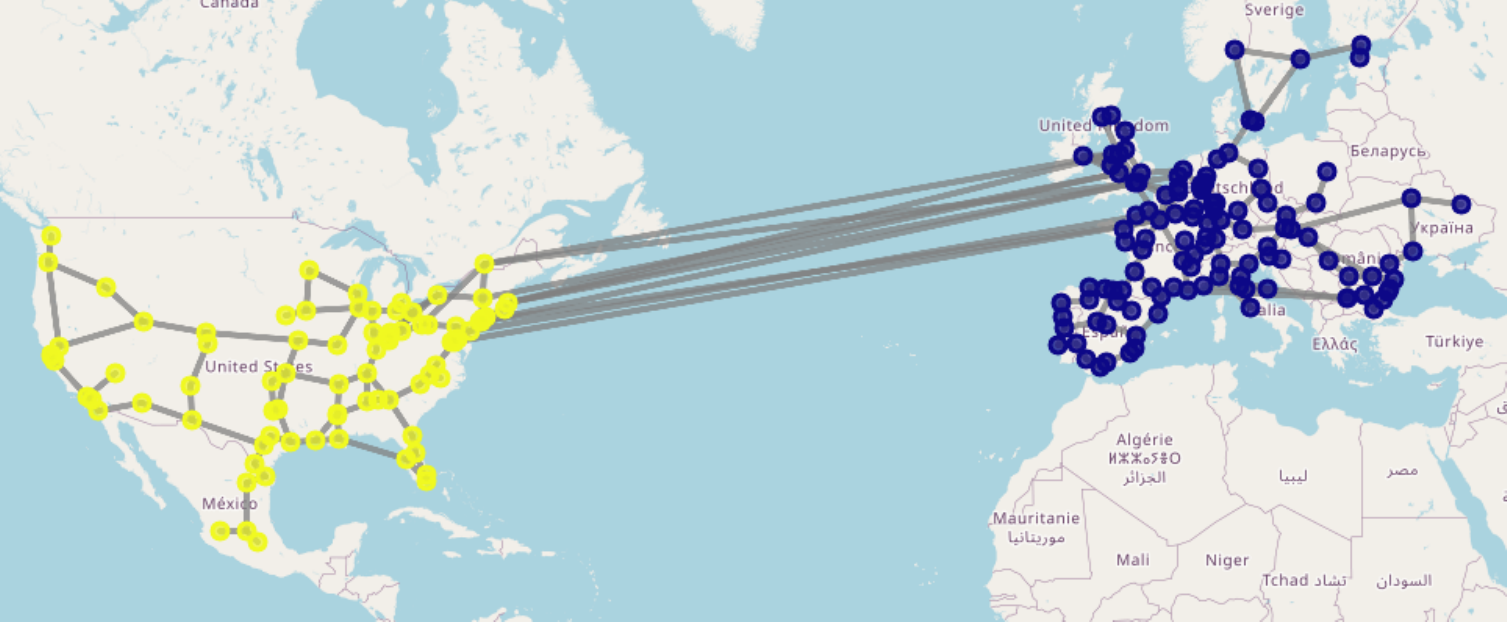}\label{fig:cogentco}}
    \end{minipage}
    \caption{We use two WANs in our evaluation: KDL (left) has 754 nodes and 1790 edges across the eastern US. Cogentco (right) has 190 nodes and 486 links spread across two continents.}
    \label{fig:topologies}
\end{figure*}


\end{document}